\tikzset{snake it/.style={decorate, decoration=snake}}
\newtheorem{theorem}{Theorem}[section]
\newtheorem{lemma}[theorem]{Lemma}
\newcommand{\defn}{\mathrel{\mathop:}=} %shrtct for definition operator
\newcommand{\nb}[1]{\color{blue}}
\newcommand{\hl}[1]{\color{magenta}}
\def\Tr{\mathop{\rm Tr}}
\def\tr{\mathop{\rm tr}}
\newcommand\vev[1]{{\ensuremath{\left\langle{#1}\right\rangle}}}
\newcommand{\beginsupplement}{%
    \setcounter{table}{0}
    \renewcommand{\thetable}{S\arabic{table}}%
    \setcounter{figure}{0}
    \renewcommand{\thefigure}{S\arabic{figure}}%
   }
\newcommand{\be}{\begin{equation}}
\newcommand{\ee}{\end{equation}}
\newcommand{\bea}{\begin{eqnarray}}
\newcommand{\eea}{\end{eqnarray}}
\newcommand{\bega}{\begin{gather}}
\newcommand{\eega}{\end{gather}}
\newcommand{\bi}{\begin{itemize}}
\newcommand{\ei}{\end{itemize}}
\newcommand{\ben}{\begin{enumerate}}
\newcommand{\een}{\end{enumerate}}
\newcommand{\bca}{\begin{cases}}
\newcommand{\eca}{\end{cases}}
\newcommand{\bln}{\begin{align}}
\newcommand{\eln}{\end{align}}
\newcommand{\bst}{\begin{split}}
\newcommand{\est}{\end{split}}
\def\ie{\begin{equation}\begin{aligned}}
\def\fe{\end{aligned}\end{equation}}
\newcommand{\bma}{\le(\begin{matrix}}
\newcommand{\ema}{\end{matrix}\ri)}
\DeclareMathOperator{\sech}{sech}
\newcommand{\alg}{\mathcal{A}}
\def\le{\left}
\def\ri{\right}
\newcommand{\beq}{\begin{equation}}
\newcommand{\eeq}{\end{equation}}
\newcommand{\hs}{\mathcal{H}}
\newcommand{\op}{\mathcal{O}}
\newcommand{\ttwo}{\text{II}}
\newcommand{\vn}{\textrm{vN}}
\begin{document}

\title{
% \fontsize{ 11}{11}\selectfont 
A covariant regulator for entanglement entropy: \\ proofs of the Bekenstein bound and QNEC}

\preprint{MIT-CTP/5332}

\author{Jonah Kudler-Flam}
\affiliation{School of Natural Sciences, Institute for Advanced Study, Princeton, NJ 08540, USA}
\affiliation{Princeton Center for Theoretical Science, Princeton University, Princeton, NJ 08544, USA}
\author{Samuel Leutheusser}
\affiliation{Princeton Gravity Initiative, Princeton University, Princeton NJ 08544, USA} 
\author{Adel A.~Rahman}
\affiliation{Stanford Institute for Theoretical Physics and Department of Physics, Stanford University, Stanford, CA 94305-4060, USA}
\author{Gautam Satishchandran}%
\affiliation{Princeton Gravity Initiative, Princeton University, Princeton NJ 08544, USA} 
\author{Antony J.~Speranza}
\affiliation{Department of Physics, University of Illinois, Urbana-Champaign, Urbana IL 61801, USA}

\begin{abstract}
While von Neumann entropies for subregions in quantum field theory universally contain ultraviolet divergences,  differences between von Neumann entropies 
are finite and well-defined in many physically relevant scenarios. 
We demonstrate that such a notion of entropy differences can be rigorously defined in
quantum field theory in a general curved spacetime by introducing a novel, covariant regulator for the entropy based on
the modular crossed product. This regulator associates a type II von Neumann algebra
to each spacetime subregion, resulting in  well-defined renormalized entropies.  
This prescription reproduces formulas for entropy differences that coincide with heuristic formulas widely used in the literature, and we prove that it satisfies desirable properties such as unitary invariance and concavity. As an application, we provide proofs of the Bekenstein bound and the quantum null energy condition, formulated directly in terms of vacuum-subtracted von Neumann entropies.
% }
\noindent 
 
\end{abstract}

\date{December 12, 2023}
\maketitle

Entanglement entropy is a widely studied quantity in physics and has played a central role in the theory of quantum information, the characterization of phases of matter, quantum field theories, and quantum gravity. 
In standard treatments of entanglement entropy, one assumes a decomposition
 of the Hilbert space 
into tensor 
factors associated with subsystems $A$ and $B$, $\mathcal{H} = \mathcal{H}_A\otimes
\mathcal{H}_B$.  Global states on $\mathcal{H}$ give rise to density matrices on the 
subsystem $A$ upon tracing over the Hilbert space $\mathcal{H}_B$, and 
the resulting state on $A$ is generically mixed if the global state was entangled
between $A$ and $B$.  The entropy of this reduced density matrix provides a measure 
of correlations between the two subsystems.  

This procedure carries over to quantum field theory after regulating,
for example, by discretizing the theory on a lattice.  
States for the lattice theory  describe the field configuration
at  an instant in time, and the lattice spacing $\lambda$ provides an ultraviolet 
(UV) cutoff that allows the Hilbert space to factorize into spatial
subregions.  Due to the high degree of vacuum entanglement 
between modes at short distances (see figure \ref{fig:S_div}), the entanglement
entropy of any state with a good continuum limit is {\it universally divergent}.
For a density matrix $\rho_\lambda$ associated with the spatial region
$\Sigma_{\mathcal{R}}$, the entropy has an area law divergence 
in the limit $\lambda\rightarrow 0$ \cite{sorkin1983entropy}
\begin{align}
  \label{eq:div}
  S_{\textrm{vN}}(\rho_{\lambda}) \:= -\tr \rho_{\lambda} \log \rho_{\lambda} \propto \frac{A}{\lambda^{d-2}} + \dots,
\end{align}
where $d$ is the spacetime dimension and $A$ is the area of the entangling 
surface.\footnote{The divergence is logarithmic in $d=2$.} 

\begin{figure}
    \centering
 \begin{tikzpicture}[scale=1]
  \tikzset{decoration={snake,amplitude=.4mm,segment length=2mm,post length=0mm,pre length=0mm}}
\colorlet{myred}{red!80!black}
\colorlet{myblue}{blue!80!black}
\colorlet{mygreen}{green!80!black}
\colorlet{mydarkred}{red!50!black}
\colorlet{mydarkblue}{blue!50!black}
\colorlet{mylightblue}{mydarkblue!6}
\colorlet{myvlightblue}{mydarkblue!3}
\colorlet{mypurple}{blue!40!red!80!black}
\colorlet{mydarkpurple}{blue!40!red!50!black}
\colorlet{mylightpurple}{mydarkpurple!80!red!6}
\colorlet{myorange}{orange!40!yellow!95!black}
\draw[black, very thick] (-3,-2) rectangle (3,2);
\fill[mylightblue] (-3,-2) rectangle (3,2);
\draw[black, very thick] (0,0) circle (1.25);
\fill[mylightpurple, very thick] (0,0) circle (1.25);
\draw[red, thick,decorate](1.05,0)--(1.45,0);
\draw[red, thick, rotate = 45/2, snake it](1.05,0)--(1.45,0);
\draw[red, thick, rotate = 45, snake it](1.05,0)--(1.45,0);
\draw[red, thick, rotate = 45*3/2, snake it](1.05,0)--(1.45,0);
\draw[red, thick, rotate = 90, snake it](1.05,0)--(1.45,0);
\draw[red, thick, rotate = 45*5/2, snake it](1.05,0)--(1.45,0);
\draw[red, thick,rotate = 135, snake it](1.05,0)--(1.45,0);
\draw[red, thick, rotate = 45*7/2, snake it](1.05,0)--(1.45,0);
\draw[red, thick,rotate = 180, snake it](1.05,0)--(1.45,0);
\draw[red, thick, rotate = 45*9/2, snake it](1.05,0)--(1.45,0);
\draw[red, thick,rotate = 225, snake it](1.05,0)--(1.45,0);
\draw[red, thick, rotate = 45*11/2, snake it](1.05,0)--(1.45,0);
\draw[red, thick,rotate = 270, snake it](1.05,0)--(1.45,0);
\draw[red, thick, rotate = 45*13/2, snake it](1.05,0)--(1.45,0);
\draw[red, thick,rotate = 315, snake it](1.05,0)--(1.45,0);
\draw[red, thick, rotate = 45*15/2, snake it](1.05,0)--(1.45,0);
\node[scale = 1.5] at (0,0) {$\Sigma_{\mathcal{R}}$};
\end{tikzpicture} 
    \caption{The entanglement entropy of a causally complete spacetime subregion $\mathcal{R}$ is divergent due to high-energy modes (red) across the entangling surface. Here $\Sigma_{\mathcal{R}}$ is a Cauchy surface for $\mathcal{R}$ and its boundary is the entangling surface.}
    \label{fig:S_div}
\end{figure}
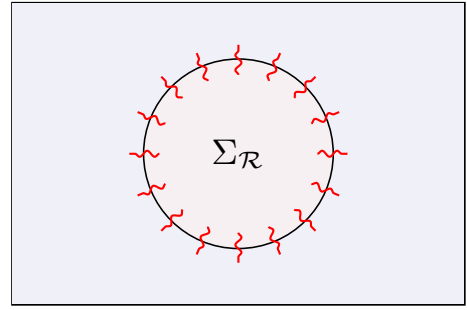

While lattices provide a practical means for investigating information 
theoretic
aspects of quantum field theories, there are several disadvantages.
 The lattice structure drastically changes the Hilbert space of the quantum theory, and it can be quite subtle (if at all possible) to determine how states and operators should be mapped to the continuum.\footnote{Indeed, there exist quantum field theories where lattice regulators do not exist \cite{nielsen1981absence,nielsen1981absenceII,nielsen1981no,friedan1982proof,2021arXiv210103320H}.} 
Furthermore, the lattice does not respect the underlying smooth structure
of the manifold  
on 
which the quantum fields live, causing quantities such as the entropy --- 
which should 
depend covariantly on the geometry of the subregion ---
 to have 
spurious sensitivity to the
short distance details of the lattice.    The need to define the density matrix 
on a spatial slice is another drawback of the lattice for Lorentz-invariant 
theories, since it obscures the underlying causal properties of the local operators 
in the theory, which are only manifest in a spacetime description.  

This motivates using a formulation in which spacetime
causality and covariance is manifest.  The Haag-Kastler algebraic approach to
 quantum field
theory provides such a formulation \cite{haag1964algebraic,haag2012local}.  
In this approach, causally complete spacetime regions, rather than spatial slices,
 take on a central role,
and each region, $\mathcal{R}$, is associated with a complete algebra of local operators, $\mathcal{A}$, comprising 
a subsystem in the theory.  However, algebraic QFT is not without its own complications.
The algebras of interest are Type $\text{III}_1$ von Neumann algebras,
for which there are no natural Hilbert space factorizations nor well-defined notions
of local density matrices.  The physical manifestation of this statement is that 
entanglement entropies are infinite in the continuum, as is apparent from
(\ref{eq:div}) when the lattice spacing is taken to zero, $\lambda\rightarrow 0$.  
However, because this divergence is universal, 
it is  reasonable to expect that it can be consistently  removed when considering
entropy differences between two states.  
Indeed, this line of reasoning has been the basis for fundamental results such as the entropic c-theorems \cite{2004PhLB..600..142C,2012PhRvD..85l5016C}, the Bekenstein bound \cite{ 2004PhRvD..69f4006M,2008CQGra..25t5021C}, the generalized second law (GSL) of thermodynamics \cite{2012PhRvD..85j4049W}, and the quantum null energy condition (QNEC) \cite{2016PhRvD..93f4044B,2016PhRvD..93b4017B}. 

In this Letter, we show that such entropy differences can be rigorously 
defined in algebraic QFT by employing a new,
covariant regulator for entanglement entropy in an arbitrary curved spacetime\footnote{Quantum field theory can be defined on any globally hyperbolic, curved spacetime (see, e.g., \cite{Hollands:2014eia,Witten:2021jzq} for further details).} that:
\begin{enumerate}
\item \label{prop1}Is well-defined in continuum quantum field theory;
\item \label{prop2}Is manifestly covariant with respect to the subregion geometry and causal structure; 
\item \label{prop3}Satisfies the properties of entropy differences in finite-dimensional quantum mechanics, such as unitary invariance and concavity.
\end{enumerate}

This opens the door to rigorous proofs concerning entanglement. We achieve this by utilizing a tool from the theory of operator algebras known as the 
\textit{modular crossed product} \cite{takesaki1973duality}.
{In the crossed product}, one enlarges the QFT Hilbert space, $\mathcal{H}$, to $\mathcal{H}\otimes L^{2}(\mathbb{R})$ which additionally contains square-integrable wavefunctions on the real line. The crossed-product algebra then includes the position operator on $ L^{2}(\mathbb{R})$ in such a way to lead to finite, renormalized von Neumann entropies. Recently, crossed products have naturally arisen as algebras of gravitationally dressed observables in perturbative quantum gravity \cite{2022JHEP...10..008W, Chandrasekaran:2022cip,2022arXiv220910454C, Jensen:2023yxy, 2023arXiv230915897K}, and it further was shown that the entanglement entropy in these cases is equivalent to the generalized entropy.  In the present work, we will show that the crossed product is a useful tool in regulating (non-gravitational) QFT.\footnote{Distinct, but related, ideas have recently been explored in \cite{2023arXiv230607323A, 2023arXiv230609314K}. }
{All of the key arguments are presented in the main text. For completeness, mathematical details and rigorous proofs are provided in the supplemental material.}

In standard quantum mechanics, density matrices are well-defined and the entropy difference of two density matrices $\rho_{\varphi}$ and $\rho_{\psi}$ associated to states $\varphi$ and $\psi$ is defined as 
\begin{align}
\label{eq:TIDeltaS}
    \Delta S_{\textrm{vN}}(\rho_{\varphi}, \rho_{\psi}) \defn -\tr\rho_{\varphi}\log\rho_{\varphi}+ \tr  \rho_{\psi}\log \rho_{\psi}.
\end{align}
Adding and subtracting $\tr\rho_{\varphi}\log  \rho_{\psi}$ leads to
\begin{align}
\begin{aligned}
\label{eq:type1sdiff}
    \Delta S_{\textrm{vN}}(\rho_{\varphi}, \rho_{\psi}) =\Delta \braket{K_{\psi}} - S_{\textrm{rel}}(\rho_{\varphi}\parallel\rho_{\psi}) .
\end{aligned}
\end{align}
The first term is the difference in the expectation value of the (one-sided) modular Hamiltonian $K_{\psi}\defn-\log \rho_{\psi}$. The second term is the relative entropy  $S_{\textrm{rel}}
(\rho_{\varphi}\parallel\rho_{\psi})\defn \tr \rho_{\varphi} \log \rho_{\varphi} - \tr \rho_{\varphi} \log \rho_{\psi}$. Eq.~\eqref{eq:type1sdiff} is a key relation for many important results in quantum field theory. However, density matrices in local quantum field theory are undefined and so one does not have an a priori independent definition of (\ref{eq:TIDeltaS}). 

We now describe our definition of the entanglement entropy difference in quantum field theory. Given two global states $\ket{\varphi},\ket{\psi} \in \mathcal{H}$ 
we define a particular one-parameter family of states $\ket{\tilde{\varphi}_{\epsilon}}$ and $\ket{\tilde{\psi}_{\epsilon}}$ in $\mathcal{H}\otimes L^{2}(\mathbb{R})$. For any non-zero $\epsilon$, the density matrices for the states restricted to $\mathcal{R}$, $\rho_{\tilde{\varphi}_{\epsilon}}$ and $\rho_{\tilde{\psi}_{\epsilon}}$, are well-defined on the crossed product algebra. While the entanglement entropy of the individual density matrices still diverges as $\epsilon \to 0$, we show that the limit of the entanglement entropy difference is well-defined
\begin{equation}
\label{eq:DeltaS}
\Delta S_{\textrm{vN}}(\varphi,\psi)\defn -\lim_{\epsilon\to 0}\textrm{Tr}\left(\rho_{\tilde{\varphi}_{\epsilon}}\log \rho_{\tilde{\varphi}_{\epsilon}} -\rho_{\tilde{\psi}_{\epsilon}}\log \rho_{\tilde{\psi}_{\epsilon}} \right).
\end{equation}
As we describe below, the (uppercase) Tr used above is not the familiar Hilbert space trace, but a trace  --- unique up a multiplicative, state-independent constant 
on the crossed product algebra. The entropy difference \eqref{eq:DeltaS} is independent of this constant ambiguity.

Given the quantum field state $\ket{\psi} \in \mathcal{H}$, the one-parameter family of states $\ket{\tilde{\psi}_{\epsilon}}=\ket{\psi} \otimes \ket{f_{\psi,\epsilon}}$ on 
 $\mathcal{H}\otimes L^{2}(\mathbb{R})$ requires a choice of wavefunction $f_{\psi,\epsilon}(X)$ on $\mathbb{R}$. The UV divergences of the quantum field are regulated by the decay of $f_{\psi,\epsilon}(X)$ at large $X$. In order to ensure that (\ref{eq:DeltaS}) only depends on the choice of the quantum field state, we provide a prescription for uniquely specifying the one-parameter family of wavefunctions $f_{\psi,\epsilon}(X)$ given the quantum state $\psi$. Explicitly, we choose a one-parameter family of functions
 \begin{equation}
 \label{eq:f(X)}
f_{\psi,\epsilon}(X) =\sqrt{\frac{\epsilon}{2}}\sech(\epsilon(X -X_{\psi}))
\end{equation}
where $1/\epsilon$ governs the width of the function, the mean is given by
\begin{equation}
\label{eq:mean}
X_{\psi}\defn \braket{\psi|h_{\omega}^{\mathcal{A}}|\psi}
\end{equation}
and $h_{\omega}^{\mathcal{A}}$ is the quantum field theory analog of the one-sided modular Hamiltonian (defined in \eqref{eq:hmod}
{as a Hermitian form}). As we will show, this requirement for the mean of $f_{\psi,\epsilon}$ is necessary to ensure that the entropy difference \eqref{eq:DeltaS} is unitarily invariant. Thus, for each $\epsilon$, $f_{\psi,\epsilon}(X)$ is fully determined by $\psi$. The choice of a sequence $f_{\psi,\epsilon}(X)$ corresponds to a regularization scheme in the definition of (\ref{eq:DeltaS}). While we expect that a large class of such functions can be chosen to define (\ref{eq:DeltaS}), the exponential decay of (\ref{eq:f(X)}) at large $X$ yields a straightforward and rigorous proof that the limit in
(\ref{eq:DeltaS}) exists. 

Eq.~\eqref{eq:DeltaS} manifestly satisfies conditions \ref{prop1} and \ref{prop2} for entropy differences, and we prove {below} that it also satisfies condition \ref{prop3}. Furthermore, we prove that \eqref{eq:DeltaS} satisfies the following formula 
\begin{align}
\label{eq:main}
  \Delta S_{\textrm{vN}}
  (\varphi,\psi) &= \bra{\varphi}h_{\omega}^{\mathcal{A}}\ket{\varphi} -\bra{\psi}h_{\omega}^{\mathcal{A}}\ket{\psi}
      \nonumber \\
      &-S_{\textrm{rel}}\left(\varphi\parallel \omega\right)+S_{\textrm{rel}}(\psi\parallel\omega), 
\end{align}
the analog of \eqref{eq:type1sdiff} where {$\omega$ is a reference  state} (often chosen to be a global vacuum state) and
$S_{\textrm{rel}}$ is the relative entropy (defined in \eqref{eq:defsrel}). {We emphasize that both the left and right-hand sides of \eqref{eq:main} are {\em independently} well-defined within our framework.}

{We will first discuss the applications to the Bekenstein bound and the QNEC. Afterwards, we provide further details on the construction of the regulator and the proof of \eqref{eq:main}.}

\paragraph*{Proof of the Bekenstein Bound(s).---}
First, let $\mathcal{R}$ be the ``right'' Rindler wedge of Minkowski spacetime (i.e. the region $x_{1}\geq |t|$ in global inertial coordinates $(t,x_{1},\dots,x_{d-1})$). Let $\omega$ be the Minkowski vacuum; in this case, $h^{\mathcal{A}}_{\omega}$ is the usual boost operator restricted to the Rindler wedge \cite{bisognano1975duality}.
Combining \eqref{eq:main} and the fact that the relative entropy is positive semi-definite, we conclude that the vacuum subtracted entanglement entropy in a general QFT in $d$-dimensional Minkowski space is bounded as
\begin{align}
  \label{eq:bek}
  \Delta S_{\textrm{vN}}
  (\varphi,\omega) \leq 2\pi \int_{\Sigma_{\mathcal{R}}
  } d^{d-1}x~x_1 \bra{\varphi} T_{tt}(x)\ket{\varphi},
\end{align}
where $\Sigma_{\mathcal{R}}$ is the $t=0$ Cauchy surface for $\mathcal{R}$, $T_{\mu\nu}$ is the stress-energy tensor so that the right-hand side of \eqref{eq:bek} is the expectation value of $h_{\omega}^{\mathcal{A}}$. This is Casini's version of the Bekenstein bound \cite{2008CQGra..25t5021C,bekenstein1981universal}, a powerful result relating entropy differences to energy in quantum field theory. Our improvement is demonstrating that both sides of this expression can be independently well-defined.\footnote{A distinct approach may be found in \cite{2018JGP...130..113L}.}

More generally, any spacetime with a bifurcate Killing horizon --- as in the case of any black hole spacetime or de Sitter space ---
is divided by this horizon into four wedges \cite{Kay_1988}. Let $\mathcal{R}$ be the right wedge and consider the case in which the vacuum state $\omega$ is thermal (KMS) with respect to the horizon Killing vector $\xi^{\mu}$ and $h_{\omega}^{\mathcal{A}}$ is the generator of translations along $\xi^{\mu}$ in $\mathcal{R}$, namely
\begin{equation}
\label{eq:onesided}
h_{\omega}^{\mathcal{A}} = \frac{2\pi}{\kappa}\int_{\Sigma_{\mathcal{R}}}\sqrt{h}\,d^{d-1}x~n^{\mu}\xi^{\nu}T_{\mu \nu}(x)
\end{equation}
where $\sqrt{h}$ is the volume form on the Cauchy surface $\Sigma_{\mathcal{R}}$, $n^\mu$ is the unit normal to $\Sigma_{\textrm{R}}$, and $\kappa$ is the surface gravity of $\xi^{\mu}$. Then, using \eqref{eq:main}, we may prove analogous Bekenstein bounds
\begin{align}
  \label{eq:Dbound}
  \Delta S_{\textrm{vN}}
  (\varphi,\omega) \leq \frac{2\pi}{\kappa}\int_{\Sigma_{\mathcal{R}}} \sqrt{h}\,d^{d-1}x~n^{\mu}\xi^{\nu}\bra{\varphi}T_{\mu \nu}(x)\ket{\varphi},
\end{align}

\paragraph*{Proof of the QNEC.---}
If we consider a wedge-region $\mathcal{R}$  in Minkowski spacetime with the entangling surface now taken to lie on a null plane defined by the function $V(y)$ (see Figure \ref{fig:wiggly}), $h_{\omega}^{\mathcal{A}}$ is known to 
generate a boost on the null sheet about the entangling surface \cite{2012PhRvD..85j4049W,
2017JPhA...50J4001C}
\begin{align}
    h_{\omega}^{\mathcal{A}} = 2\pi \int_{\mathbb{R}^{d-2}}d^{d-2}y
    \int_{V(y)}^\infty dv\,(v-V(y))\,T_{vv}(v,y),
\end{align}
where $v$ is an affine null coordinate and $y$ denotes the transverse coordinates. Combining \eqref{eq:main} and the positivity of the second shape derivative of the relative entropy on null sheets (proved in \cite{2018arXiv181204683C}) we arrive at a \textit{point-wise} bound on 
 the null energy density (the QNEC) \begin{align}
\label{eq:QNEC}
    2\pi \bra{\varphi} T_{vv}(y)\ket{\varphi}
    \geq
     \frac{\delta^2 \Delta S_{\textrm{vN}}(\varphi,\omega)}{\delta^2 V(y)} .
\end{align}
Furthermore, 
the entropy of $\omega$ is independent of the null 
cut, so---under the assumption that the errors in the regulated
entropy have finite variational derivatives---the $\Delta$ may be removed
\begin{align}
2\pi \bra{\varphi} T_{vv}(y)\ket{\varphi}
\geq
    \lim_{\epsilon \rightarrow 0}\frac{\delta^2  S_{\textrm{vN}}(\rho_{\tilde{\varphi}_{\epsilon}})}{\delta^2 V(y)} ,
\end{align}
where $S_{\textrm{vN}}(\rho_{\tilde{\varphi}_{\epsilon}})\defn-\Tr (\rho_{\tilde{\varphi}_{\epsilon}}\log \rho_{\tilde{\varphi}_{\epsilon}})$. Thus, the local expected value of the stress-energy --- which can be negative --- is bounded below by the local variation in the vacuum subtracted entanglement entropy. In \cite{2018arXiv181204683C} a version of the QNEC, phrased as a second derivative of the relative entropy, was proven. Our contribution is to prove the QNEC phrased in terms of entanglement entropy, as in its original formulation \cite{2016PhRvD..93f4044B,2016PhRvD..93b4017B}.   
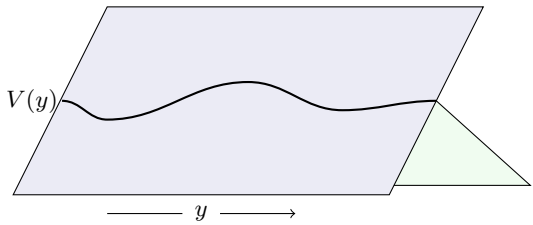
\begin{figure}
    \centering
    \begin{tikzpicture}[scale=2.5]
  \tikzset{decoration={snake,amplitude=.4mm,segment length=2mm,post length=0mm,pre length=0mm}}
\colorlet{myred}{red!80!black}
\colorlet{myblue}{blue!80!black}
\colorlet{mygreen}{green!80!black}
\colorlet{mydarkred}{red!50!black}
\colorlet{mydarkblue}{blue!50!black}
\colorlet{mylightblue}{mydarkblue!8}
\colorlet{mylightgreen}{mygreen!6}

\colorlet{myvlightblue}{mydarkblue!3}
\colorlet{mypurple}{blue!40!red!80!black}
\colorlet{mydarkpurple}{blue!40!red!50!black}
\colorlet{mylightpurple}{mydarkpurple!80!red!6}
\colorlet{myorange}{orange!40!yellow!95!black}

\fill[mylightgreen] (1/4+.03, 1/2) to[out = 0, in = 180] (2/4, 1/2-.1) to[out = 0, in = 180](5/4, 1/2+.1) to[out = 0, in = 180](7/4, 1/2-.05) to[out = 0, in = 180] (9/4,1/2) -- (9/4+1-.5,1/2+1/4-.7)-- (1/4+1-.5,1/2+1/4-.7)--cycle;
\draw[black] (1/4+.03, 1/2) to[out = 0, in = 180] (2/4, 1/2-.1) to[out = 0, in = 180](5/4, 1/2+.1) to[out = 0, in = 180](7/4, 1/2-.05) to[out = 0, in = 180](9/4,1/2) -- (9/4+1-.5,1/2+1/4-.7)-- (1/4+1-.5,1/2+1/4-.7)--cycle;

\fill[mylightblue,fill opacity = .75](0,0) -- (2,0)--(5/2,1)--(1/2,1)--cycle;
\draw[black] (0,0) -- (2,0)--(5/2,1)--(1/2,1)--cycle;
\draw[black,  thick] (1/4+.01, 1/2) to[out = 0, in = 180] (2/4, 1/2-.1) to[out = 0, in = 180](5/4, 1/2+.1) to[out = 0, in = 180](7/4, 1/2-.05) to[out = 0, in = 180](9/4,1/2);

\node[scale = 1] at (1/4-.15,1/2) {$V(y)$};
\node[scale = 1] at (1,-.1) {$y$};
\draw[->] (1.1,-.1)--(1.5,-.1);
\draw[] (.5,-.1)--(.9,-.1);

\end{tikzpicture} 
    \caption{The null plane (purple) in Minkowski spacetime is partitioned by the wiggly cut $V(y)$. This defines the wiggly Rindler wedge as 
    the region to the right of the cut, with the green plane
    forming the past null boundary.
    }
    \label{fig:wiggly}
\end{figure}
\paragraph*{Mathematical Background.---}Having described the applications 
of our proposal for regulated entropy differences,
we now describe our construction in more detail
and prove its main properties.
In order to analyze QFT from an information theoretic standpoint, we must introduce essential elements of the theory of von Neumann algebras. Given a Hilbert space, $\mathcal{H}$ and a set of operators, $\mathcal{A}$, the commutant, $\mathcal{A}'$, is defined as the set of bounded operators on $\mathcal{H}$ that commute with $\mathcal{A}$. A von Neumann algebra is a set of operators that is its own double commutant $\mathcal{A} = \mathcal{A}''$. There are three broad classes of von Neumann algebras \cite{murray1936rings}
\begin{itemize}
    \item[$\ast$]Type I von Neumann algebras are the familiar ones from quantum mechanics, containing both pure states and traces. 
    \item[$\ast$]Type II von Neumann algebras do not have pure states, though they do have traces.
    \item[$\ast$]Type III von Neumann algebras are the ones relevant for subregions in QFT and neither have pure states nor traces.
\end{itemize}
For a von Neumann algebra $\mathcal{A}$ and two (cyclic-separating\footnote{Recall that a state $\ket{\Phi} \in \mathcal{H}$ is called \emph{cyclic} if the set of states $\mathcal{A}\ket{\Phi}$ is dense in $\mathcal{H}$ and is called \emph{separating} if $a\ket{\Phi} = 0$ implies that $a = 0$ for all $a \in \mathcal{A}$.}) states $\ket{\omega}$ and $\ket{\varphi}$, the \textit{relative Tomita operator} $S_{\varphi|\omega}$ is defined as the anti-linear operator with the following action \cite{tomita1967canonical, takesaki2006tomita}
\begin{align}
\label{eq:relTom}
  S_{\varphi|\omega}\,a \ket{\omega} = a^{\dagger} \ket{\varphi}, \quad a \in \mathcal{A}.
\end{align}
The \textit{relative modular operator} $\Delta_{\varphi|\omega}$ and \textit{relative modular Hamiltonian} $h_{\varphi|\omega}$ are the following self-adjoint operators
\begin{align}
\label{eq:ModSS}
\Delta_{\varphi|\omega}=S_{\varphi|\omega}^{\dagger}\,S_{\varphi|\omega}, \quad h_{\varphi|\omega} \defn -\log \Delta_{\varphi|\omega}.
\end{align}
The relative entropy $S_{\mathrm{rel}}(\varphi\parallel\omega)$ is then defined as \cite{araki1976relative}
\begin{align}
\label{eq:defsrel}
    S_{\mathrm{rel}}(\varphi\parallel\omega) \defn \bra{\varphi}h_{\omega|\varphi}\ket{\varphi}.
\end{align}
Unlike the formula for relative entropy involving density matrices, \eqref{eq:defsrel} is well-defined even in continuum QFT (i.e. for Type III algebras).

Finally, we must introduce the crossed product algebra. Given a Type III von Neumann algebra $\mathcal{A}$ acting on a Hilbert space $\mathcal{H}$ and its modular operator $\Delta_{\omega}\defn \Delta_{\omega|\omega}$ for the state $\ket{\omega}$, we may define a new von Neumann algebra acting on the extended Hilbert space $\mathcal{H}_{\mathrm{ext.}} \defn \mathcal{H}\otimes L^2({\mathbb{R}})$ as 
\begin{align}
\label{eq:crossalg}
  \mathcal{A} \rtimes \mathbb{R} \defn \{\Delta_{\omega}^{i\hat{P}}a\Delta_{\omega}^{-i\hat{P}}, \hat{X};a \in \mathcal{A} \}'',
\end{align}
 where $\hat{X}$ and $\hat{P}$ are 
the conjugate position and momentum operators
acting on $L^2({\mathbb{R}})$.
This extended algebra is a Type II algebra \cite{takesaki1973duality} with trace, 
\begin{align}
  \label{eq:IIinfty}
  \Tr\,\hat{a} = \int_{\mathbb{R}} dX~e^X \bra{\omega,X} \hat{a}\ket{\omega,X}, \quad \hat{a} \in \mathcal{A} \rtimes \mathbb{R}.
\end{align}
where $\ket{\omega,X}\defn \ket{\omega}\otimes \ket{X}$ and $\ket{X}$ is an eigenstate of $\hat{X}$ with eigenvalue $X$. The trace (\ref{eq:IIinfty}) is defined up to an overall multiplicative, state-independent constant. The entropy difference given by \eqref{eq:TIDeltaS} will not depend on this ambiguity. That this expression satisfies the cyclicity properties of a trace (i.e., $\Tr(\hat{a}\hat{b} )= \Tr (\hat{b}\hat{a})$) is nontrivial. A proof can be found in \cite{2022JHEP...10..008W}. Remarkably, adding the auxiliary variable $\hat{X}$ to the algebra has opened the door to having well-defined traces and hence density matrices and von Neumann entropies.

\paragraph*{Entropy differences in QFT.---}

We are now ready to present our prescription for entropy differences. We consider the following classical-quantum states in  $\mathcal{H}_{\mathrm{ext.}}$,
\begin{align}
  \ket{\tilde{\varphi}_\epsilon} \defn  \int_{\mathbb{R}} dX f_{\varphi,\epsilon}(X)\ket{\varphi} \otimes \ket{X},
\end{align}
where $f_{\varphi,\epsilon}$ was defined in \eqref{eq:f(X)}.
The regulating effect of the wavefunction can be seen from examining the structure of the crossed product algebra \eqref{eq:crossalg}.
The quantum field theory operators are smeared in (modular) time, {with the width  $\epsilon$ of the Fourier transformed wavefunction $\tilde{f}_{\varphi,\epsilon}(P)$} determining the amount of smearing. As $\epsilon \rightarrow 0$, we recover local operators. Smearing local operators in time is known to be sufficient for regulating divergences \cite{borchers1964field,2023arXiv230202709S} and, indeed, the universal short-distance divergences in correlation functions become smooth for all operators in the crossed product algebra with finite trace, with the wavefunction providing finite smearing to all such operators \cite{2023arXiv230302837W,2023arXiv22XXXXXXXK2}.

Recall that we fixed the mean $X_{\varphi}$ of $f_{\varphi,\epsilon}$ to be $X_{\varphi} \defn \bra{\varphi} h_{\omega}^{\mathcal{A}}\ket{\varphi}$
where we assume that modular Hamiltonian can be decomposed as 
\begin{equation}
\label{eq:hmod}
h_{\omega} = h_{\omega}^{\mathcal{A}}+h_{\omega}^{\mathcal{A}'}
\end{equation}
with $h_{\omega}^{\mathcal{A}}$ and $h_{\omega}^{\mathcal{A}'}$ % 
{\em Hermitian forms} 
on $\mathcal{H}$ affiliated with $\mathcal{A}$ and $\mathcal{A}^{\prime}$ respectively. Thus, while they are not defined as operators on $\mathcal{H}$, 
matrix elements of $h_{\omega}^{\mathcal{A}}$ and $h_{\omega}^{\mathcal{A}^{\prime}}$ are well-defined for a dense set of states and, furthermore, they satisfy $[h_{\omega}^{\mathcal{A}},a^{\prime}]=0=[h_{\omega}^{\mathcal{A}^{\prime}},a]$ for $a\in \mathcal{A}$ and $a^{\prime}\in \mathcal{A}^{\prime}$ (see section \ref{subsec:sesq} of the supplemental material).
While well-defined, this split is not unique because $h_{\omega}^{\mathcal{A}}$ can be augmented by an operator-valued distribution localized at the boundary.\footnote{See \cite{2019PhRvD..99l5020C} for relevant discussions.} This ambiguity drops out of the physical statements such as the Bekenstein bound and QNEC. There are many explicitly known examples of $h_{\omega}$ satisfying such a splitting. An important example is when $h_{\omega}$ is the boost generator and so $h_{\omega}^{\mathcal{A}}$ is explicitly of the form  \eqref{eq:onesided}. Near the entangling surface, $h_{\omega}$ is expected to locally approach the boost generator, which suggests that the 
existence of a decomposition is generic. 

The density matrices for $\ket{\tilde{\varphi}_\epsilon}$ and $\ket{\tilde{\omega}_\epsilon}$ are given by \cite{2022JHEP...10..008W, Jensen:2023yxy}
\begin{align}
  \rho_{\tilde{\varphi}_\epsilon} &= f_{\varphi,\epsilon}(\hat{X})e^{-i\hat{P}h_{\omega}}e^{-\hat{X}}\Delta_{\omega|\varphi}e^{i\hat{P}h_{\omega}}f_{\varphi,\epsilon}(\hat{X})
  \\
  \rho_{\tilde{\omega}_\epsilon} &= f_{\omega,\epsilon}(\hat{X})^2e^{-\hat{X}}.
  \label{eqn:rhovac}
\end{align}
The entropy of the regulated vacuum is
\beq
S_{\vn}(\rho_{\tilde{\omega}_\epsilon}) = X_{\varphi}- 
\int_{\mathbb{R}} dX f_{\omega,\epsilon}({X})^2\log f_{\omega,\epsilon}({X})^2.
\eeq
The divergence coming from the local QFT limit $\epsilon
\rightarrow 0$ is logarithmic, consistent with 
$\epsilon$ being a dimensionless regulator. 
The absence of more singular power-law divergences is 
 related to the fact that crossed-product entropies
are most naturally interpreted as entropy differences, and hence
state-independent area-law divergences should not appear. 

The entropy of the regulated excited state is more
involved due to the need to take a logarithm
of noncommuting operators appearing in the 
density matrix $\rho_{\tilde{\varphi}_\epsilon}$.  
The exact expression for this entropy is derived 
in upcoming work by one of us 
\cite{Faulkner2023}, and for completeness
the derivation is reproduced in section \ref{app:bound} of the supplemental
material. The result is
\begin{multline}
    S_{\vn}(\rho_{\tilde\varphi_\epsilon})
= 
S_{\vn}(\rho_{\tilde\omega_\epsilon})-
S_\text{rel}(\varphi\parallel\omega) 
+\langle{\hat{X}}\rangle_{\tilde\varphi_\epsilon}
 \\
+\left\langle2(\epsilon h_{\omega|\varphi} \coth(\epsilon h_{\omega|\varphi})-1) - 
R_\epsilon(\Delta_\varphi, h_{\omega|\varphi})
\right\rangle_\varphi
\label{eqn:Sexact}
\end{multline}
where we have used the wavefunction $f_{\varphi,\epsilon}(\hat{X})$ given in \eqref{eq:f(X)} and  $R_\epsilon(\Delta_\varphi, h_{\omega|\varphi})$
is a positive operator. Since the map that sends
$|\varphi\rangle$ to the  state 
$|\tilde{\varphi}_f\rangle = |\varphi\rangle \otimes
|f\rangle$ for fixed wavefunction $f(X)$
is a quantum
channel from states on the QFT algebra to states 
on the crossed product algebra, monotonicity of 
relative entropy for this channel implies that the expression
on the second line of (\ref{eqn:Sexact}) must be positive.
Together with the positivity of $R_\epsilon(\Delta_\varphi,h_{\omega|\varphi})$, 
this shows that the second line of (\ref{eqn:Sexact})
is bounded between $0$ and 
$
\langle 2(\epsilon h_{\omega|\varphi} \coth(\epsilon h_{\omega|\varphi})-1)\rangle_\varphi$, which itself is bounded by $2\epsilon(S_{\textrm{rel}}(\varphi\parallel\omega)+2)$.  Hence,
for states with finite relative entropy, this term vanishes
in the local QFT limit. Recalling (\ref{eq:mean}),
we arrive at the desired expression for continuum entropy
differences,
\beq
\label{eq:vac_subtract}
\Delta S_{\vn}(\varphi,\omega) = 
\langle\varphi|h_\omega^{\mathcal{A}}|\varphi\rangle
-S_{\textrm{rel}}(\varphi\parallel\omega)
\eeq
Let $\psi$ be another state with $\langle\psi|h_\omega^{\mathcal{A}}|\psi\rangle$ and $S_{\textrm{rel}}(\psi\parallel\omega)$ finite. We then define the entropy difference $\Delta S_{\textrm{vN}}(\varphi,\psi)$ as
\begin{equation}
    \label{eq:postmain}
    {\Delta S_{\textrm{vN}}(\varphi, \psi) = \Delta S_{\textrm{vN}}(\varphi, \omega) - \Delta S_{\textrm{vN}}(\psi, \omega)} ,
\end{equation}
which immediately gives \eqref{eq:main}. In summary, the local QFT limit $\epsilon \to 0$  of differences of von Neumann entropies of density matrices on Type II algebras reproduces the expected result for differences in von Neumann entropies in Type III algebras. 

\paragraph*{Properties.---}There are various properties that entropy differences must have in order to be physically meaningful.
To begin, it is manifest from the definitions \eqref{eq:DeltaS}  that the entropy difference is both antisymmetric
\begin{align}
  \Delta S_{\textrm{vN}}(\varphi_1, \varphi_2) = -\Delta S_{\textrm{vN}}(\varphi_2, \varphi_1) 
\end{align}
and transitive
\begin{align}  
  \Delta S_{\textrm{vN}}(\varphi_1, \varphi_3) = \Delta S_{\textrm{vN}}(\varphi_1, \varphi_2) +\Delta S_{\textrm{vN}}(\varphi_2, \varphi_3).
\end{align}
The antisymmetry implies that the entropy difference of a state with itself is zero. Notably, unlike the relative entropy, the entropy difference being zero does not imply that the states are equivalent. Indeed, a crucial property of entanglement entropy is that it is invariant under local unitary transformations of the state. Such a transformation only mixes the degrees of freedom in the algebra within itself, so there is no information gained or lost. 

For unitary operators $U \in \mathcal{A}$ and $U' \in \mathcal{A}'$, it is straightforward to check that the following relations for the relative Tomita
operators hold
\begin{align}
\begin{aligned}
S_{UU'{\varphi}|{\omega}} = U'\,S_{\varphi|\omega}\,U^{\dagger},\quad S_{{\varphi}|UU'{\omega}} = U\,S_{\varphi|\omega}\,U'^{\dagger},
\end{aligned}
\end{align}
and thus, by \eqref{eq:ModSS}, 
\begin{align}
\Delta_{{\varphi}|UU'{\omega}} = U'\,\Delta_{\varphi|\omega}\,U'^{\dagger},\quad \Delta_{{UU'\varphi}|{\omega}} =U\,\Delta_{\varphi|\omega}\,U^{\dagger}.
\end{align}
The difference in relative entropies is then
\begin{align}
\begin{aligned}
  S_{\textrm{rel}}(\varphi\parallel\omega) &-S_{\textrm{rel}}(UU'\varphi\parallel\omega) 
  = \bra{\varphi}(h_{\omega|\varphi}- h_{{U^{\dagger}}{\omega}|\varphi} )\ket{\varphi}
   \\
  &=\bra{\varphi}(( h_{\omega} - h_{{U^{\dagger}}{\omega}} )- 
  (h_{\varphi|\omega}
  - 
    h_{\varphi|{U^{\dagger}}{\omega}}))\ket{\varphi}
    \\
    &= 
      \bra{\varphi}( h_{\omega} - U^{\dagger}h_{{\omega}}U )\ket{\varphi},
\end{aligned}
\end{align}
where in the second line, we have again used the identity $h_{\varphi|\omega} -h_{\omega} = h_{\varphi}-h_{\omega|\varphi}$.
Due to the decomposition $h_{\omega} = h_{\omega}^{\mathcal{A}}+h_{\omega}^{\mathcal{A}'} $, we have
\begin{equation}
       S_{\textrm{rel}}(\varphi\parallel\omega) -S_{\textrm{rel}}(U\varphi\parallel\omega)
=\bra{\varphi}h_{\omega}^{\mathcal{A}}\ket{\varphi} -\bra{\varphi}U^{\dagger}h_{\omega}^{\mathcal{A}}U\ket{\varphi}. 
\end{equation}
From \eqref{eq:main}, we conclude that the entropy difference is unitarily invariant. Moreover, it would not have been unitarily invariant had we chosen any other value for the mean $X_{\varphi}$.

We next prove that $\Delta S_{\textrm{vN}}$ cannot decrease from classically mixing states. This follows directly from the fact that the relative entropy is jointly convex in its arguments \cite{araki1976relative}, so using \eqref{eq:vac_subtract} one finds
  \begin{align}
    \Delta S_{\textrm{vN}}(\sum_i \lambda_i \varphi_i,\omega) &\geq \sum_i\lambda_i\left[ \bra{\varphi_i}h_{\omega}^{\mathcal{A}} \ket{\varphi_i}-S_{\textrm{rel}}\left(\varphi_i,\omega\right)\right]
    \nonumber
    \\  
    &=\sum_i \lambda_i \Delta S_{\textrm{vN}}(\varphi_i,\omega) .
  \end{align}

\paragraph*{Discussion.---}
\label{sec:disc}
In this Letter, we have demonstrated that entropy differences in quantum field theory are mathematically well-defined and obey several desirable properties. This led to rigorous proofs of the Bekenstein bound and the QNEC. There are two exciting avenues for future directions. The first is to rigorously prove other theorems that have been previously derived using heuristic entropy differences, such as c-theorems and the generalized second law.\footnote{The case of the generalized second law is addressed in \cite{Faulkner2023}.} This will require addressing challenges such as comparing the entropies of the same state but for different spacetime regions. The other direction is that our formalism can be leveraged to prove new properties about quantum field theory and quantum gravity. 

\acknowledgments
We thank Tom Faulkner, Ted Jacobson, 
Nima Lashkari, Ronak Soni, 
Jonathan Sorce, and Edward Witten for discussions. 
JKF is supported by the Marvin L.~Goldberger Member Fund at the Institute for Advanced Study and the National Science Foundation under Grant No. PHY-2207584. S.L.~acknowledges the support of NSF grant No. PHY-2209997. A.A.R.~is supported by the Stanford Institute of Theoretical Physics and the National Science Foundation under Grant No. PHY-1720397. G.S.~and S.L.~are supported by the Princeton Gravity Initiative at Princeton University. A.J.S.~is supported by the Air Force
Office of Scientific Research under award number FA9550-19-1-036.
A.J.S. would like to thank the Isaac Newton Institute for Mathematical Sciences for support and hospitality during the 
programme ``Black holes:\ bridges between number theory and holographic quantum information''
when work on this paper was undertaken.  This work 
was supported by EPSRC Grant Number EP/R014604/1.

\bibliography{main}
\onecolumngrid
\section*{Supplemental Material}
\beginsupplement

\section{Comments on Hermitian forms and domains }
\label{subsec:sesq}
As noted in the main text, the integral of the stress-energy flux over a partial Cauchy surface --- as in the right-hand side of (\ref{eq:Dbound}) --- has infinite fluctuations and therefore is not a well-defined, unbounded operator. Similarly, the one-sided modular Hamiltonian $h_{\omega}^{\mathcal{A}}$ cannot be defined as an operator on $\mathcal{H}$. Nevertheless, their matrix elements can be defined on a class of states and so they can be defined as {\em sesquilinear forms} on the Hilbert space $\mathcal{H}$. In this subsection, we will review the basic definitions and properties of sesquilinear forms as well as our assumptions regarding the domain and properties of the one-sided modular Hamiltonian. 

A {\it sesquilinear form} is a map
$\mathfrak{s}:\mathcal{D}(\mathfrak{s})\times \mathcal{D}(\mathfrak{s})
\rightarrow \mathbb{C}$ that is conjugate-linear
in the first argument and linear in the second,
where the subspace 
$\mathcal{D}(\mathfrak{s})\subseteq
\hs$ is known as the {\it domain of $\mathfrak{s}$} \cite{Schmudgen2012}.  If the form satisfies 
$\mathfrak{s}(|\phi\rangle,|\chi\rangle) = 
\overline{\mathfrak{s}(|\chi\rangle, |\phi\rangle)}$,
it is called a {\it Hermitian form}.  
Given any unbounded self-adjoint 
operator
$A$ with domain $\mathcal{D}(A)$, one can define
its associated Hermitian form $\mathfrak{s}_A$
by the relation $\mathfrak{s}_A(|\phi\rangle,|\chi\rangle)
=\langle \phi|A|\chi\rangle$ with domain
$\mathcal{D}(\mathfrak{s}_A) = \mathcal{D}(|A|^{\frac12})
\supsetneq \mathcal{D}(A) $.  Notably, the domain
of the form $\mathfrak{s}_A$ is strictly larger
than the domain of the operator $A$ when it
is unbounded.

Since the notion of a sesquilinear form may be unfamiliar to a general audience, we now give some representative examples. In standard quantum mechanics a familiar example 
 is the operator-valued distribution $\delta(\hat{X}-X_{0})$ on $L^{2}(\mathbb{R})$ which is defined as a Hermitian form $\mathfrak{s}_{X_{0}}$ that acts on wavefunctions $f,g\in L^{2}(\mathbb{R})$ by $\mathfrak{s}_{X_{0}}(f,g)\defn \overline{f(X_0)} g(X_0)$. The domain $\mathcal{D}(\mathfrak{s}_{X_{0}})$ is the space of square-integrable wavefunctions on $\mathbb{R}$ that are continuous at $X_{0}$. This space is dense in $L^{2}(\mathbb{R})$. Therefore, while $\delta(\hat{X}-X_{0})$ cannot be defined as an unbounded operator it can be defined as a sesquilinear form. 
 
Similarly, in quantum field theory, the local operator-valued
 distribution $\op(x)$ cannot be defined as an unbounded operator, but it can always be defined as a sesquilinear form on a dense domain. Indeed, one can expect there to be a large class of states on which the matrix elements $\braket{\psi|\mathcal{O}(x)|\chi}$ are finite since the vacuum correlation functions of $\mathcal{O}(x)$ at spacelike separated points are finite. Thus we could define a domain of $\mathcal{O}(x)$ to consist of states obtaining be acting with bounded functions of operators smeared in an open spacetime regions separated from $x$ by a finite separation $\varepsilon$. By the Reeh-Schlieder theorem \cite{Reeh:1961ujh}, such a domain is dense in $\mathcal{H}$. One can straightforwardly extend $\mathcal{O}(x)$ to a larger domain by simply decreasing the value of $\epsilon$ in the previous construction. Following standard terminology, we will refer to $\mathcal{O}(x)$ as a {\em local operator}.  While $\mathcal{O}(x)$ is not a true operator on the Hilbert space, the form $\mathcal{O}(x)$ enjoys many properties of the quantum field operators. Indeed, if $\mathcal{A}$ is the algebra of quantum fields in a region $\mathcal{R}$ and $\mathcal{A}^{\prime}$ is the commutant of $\mathcal{A}$, then if $\mathcal{O}(x)$ lies in $\mathcal{R}$ we have that 
 \begin{equation}
 \label{eq:commsesq}
 [a^{\prime},\mathcal{O}(x)] = 0
 \end{equation}
where (\ref{eq:commsesq}) is defined in the sense of matrix elements on the domain of $\mathcal{O}(x)$ and $a^{\prime}\in \mathcal{A}^{\prime}$ are elements of the commutant that preserve this domain. 

 Since local operators can be defined as sesquilinear forms, the integral of local operators can also be defined as a sesquilinear form where the domain of the integrated operator can be defined as the intersection of the domain of the local operators appearing in the integral over the range of integration. An example relevant to considerations of this paper is the case where the spacetime contains a bifurcate Killing horizon with Killing vector $\xi^{\mu}$ and the local operator is $T_{\mu \nu}\xi^{\nu}$. In this case, the region $\mathcal{R}$ is the right wedge and one can define
 \begin{equation}
 \label{eq:hxi}
h_{\xi} = \frac{2\pi}{\kappa}\int_{\Sigma}\sqrt{h}\,d^{d-1}x~ n^{\mu}\xi^{\nu}T_{\mu\nu}
 \end{equation}
 where $\Sigma$ is a complete Cauchy surface for the spacetime and $\kappa$ is the surface gravity of $\xi^{\mu}$. As is well-known, $h_{\xi}$ is a true (unbounded) operator defined on a dense domain of the Hilbert space where $h_{\xi}$ acts by boosts around the bifurcation surface. One can split (\ref{eq:hxi}) into the sum of two contributions 
 \begin{equation}
 \label{eq:hxiLR}
 h_{\xi} = h_{\xi}^{\mathcal{R}^{\prime}} + h_{\xi}^{\mathcal{R}}
 \end{equation}
 where 
 \begin{equation}
 \label{eq:hxiR}
 h_{\xi}^{\mathcal{R}} = \frac{2\pi}{\kappa}\int_{\Sigma_{\mathcal{R}}}\sqrt{h}\,d^{d-1}x~ n^{\mu}\xi^{\nu}T_{\mu\nu}
 \end{equation}
 where $\Sigma_{\mathcal{R}}$ is a spacelike Cauchy surface for $\mathcal{R}$ and $h_{\xi}^{\mathcal{R}^{\prime}}$ is defined similarly with  $\Sigma_{\mathcal{R}}$ replaced by a Cauchy surface $\Sigma_{\mathcal{R}^{\prime}}$ for the causal complement $\mathcal{R}^{\prime}$ of $\mathcal{R}$ (also known as the left wedge in this case). 
 The one-sided integral does not define a true operator due to ultraviolet correlations across the bifurcation surface. However $h_{\xi}^{\mathcal{R}}$ --- and similarly $h_{\xi}^{\mathcal{R}^{\prime}}$ --- are well-defined as sequilinear forms on a dense domain of states. Furthermore, since $h_{\xi}^{\mathcal{R}}$ is an integral of a local form, it satisfies 
 \begin{equation}
 [a^{\prime},h_{\xi}^{\mathcal{R}}]=0
 \end{equation}
in the sense of matrix elements for any $a^{\prime}\in \mathcal{A}^{\prime}$ that preserves the domain of $h_{\xi}^{\mathcal{R}}$. A similar statement holds for  $h_{\xi}^{\mathcal{R}^{\prime}}$ which commutes with elements $a\in \mathcal{A}$. Therefore, we say that $h_{\xi}^{\mathcal{R}}$ is a Hermition form {\em  affiliated} with $\mathcal{A}$  and $h_{\xi}^{\mathcal{R}^{\prime}}$ is affiliated with $\mathcal{A}^{\prime}$.

If the state $\ket{\omega}$ is thermal (KMS) in $\mathcal{R}$ and its causal complement $\mathcal{R}^{\prime}$ then the modular Hamiltonian $h_{\omega}$ (defined in (\ref{eq:ModSS})) is equivalent to $h_{\xi}$ and (\ref{eq:hxiLR}) corresponds to a splitting of the modular operator into two forms associated to $\mathcal{R}$ and $\mathcal{R}^{\prime}$ respectively. More generally, we assume in this letter that, for {\em any} state $\omega$ in a general subregion $\mathcal{R}$, the modular Hamiltonian $h_{\omega}$ admits such a splitting into one-sided modular Hamiltonians
\begin{equation}
\label{eq:modHamsplit}
h_{\omega} = h_{\omega}^{\mathcal{A}^{\prime}} + h_{\omega}^{\mathcal{A}}
\end{equation}
where $h_{\omega}^{\mathcal{A}}$ is a Hermitian form affiliated with $\mathcal{A}$ and $h_{\omega}^{\mathcal{A^{\prime}}}$ is a Hermitian form affiliated with $\mathcal{A}^{\prime}$ on dense domains of the Hilbert space. Thus, we assume that they satisfy the property\footnote{The splitting of the modular Hamiltonian into the sum of Hermitian forms was required when proving unitary invariance of the entropy difference \ref{eq:DeltaS}. In this proof we need only the weaker condition that, for any unitaries $U\in \mathcal{A}$ and $U^{\prime}\in \mathcal{A}^{\prime}$, $\braket{\varphi|U^{\dagger}[h_{\omega}^{\mathcal{A}^{\prime}},U]|\varphi}=0=\braket{\varphi|U^{\prime\dagger}[h_{\omega}^{\mathcal{A}},U^{\prime}]|\varphi}$.}
\begin{equation}
\label{eq:ahcomm}
[a^{\prime},h_{\omega}^{\mathcal{A}}] = 0 \quad \textrm{ and} \quad [a,h_{\omega}^{\mathcal{A}^{\prime}}] = 0
\end{equation}
in the sense of matrix elements for any $a\in\mathcal{A}$ and $a^{\prime}\in\mathcal{A}^{\prime}$. These properties reflect the general expectation that the one-sided modular Hamiltonians for generic regions behave similarly to the one-sided boost generators (\ref{eq:hxiR}) near the entangling surface.

\section{Details on regulated entropy}
\label{app:bound}

In this appendix we fill in the details demonstrating
that entropy differences defined using
a crossed product regulator given by \eqref{eq:DeltaS} are well-defined and satisfy the crucial relation
(\ref{eq:main}) in the limit that the regulator
is taken to zero.  In order to show this, we use
the exact expressions derived in \cite{Faulkner2023}
for the entropy in the type $\ttwo$
crossed product algebra for states of the form
$|\tilde{\varphi}_f\rangle = 
|\varphi\rangle\otimes|f\rangle$, where $|f\rangle$ is the 
state associated with a positive, nowhere-vanishing,
normalized
wavefuntion $f(X)$ in $L^2(\mathbb{R})$.  This 
allows us to derive bounds on the 
corrections to the leading order expression for the entropy,
and when using the wavefunction $f_{\varphi,\epsilon}$ given by  \eqref{eq:f(X)}, this bound is determined by 
the relative entropy $S_\text{rel}(\varphi\parallel \omega)$
with respect to the vacuum state $|\omega\rangle$, and 
is suppressed by $\epsilon$.  Hence, for states with 
finite relative entropy, the limiting expression for 
the entropy exists and agrees with the semiclassical 
expression found in \cite{Chandrasekaran:2022cip,
2022arXiv220910454C, Jensen:2023yxy}.

The exact density matrix for the state $|\tilde{\varphi}_f\rangle$
on the crossed product algebra was obtained in 
\cite{Jensen:2023yxy}, and is given by
\beq
\rho_{\tilde{\varphi}_f} = f(\hat{X}) e^{-\frac{\hat{X}}{2}}
e^{-i\hat{P} h_\omega}
\Delta_\omega^{-\frac12}
\Delta_{\varphi|\omega}
\Delta_\omega^{-\frac12}
e^{i\hat{P}h_\omega}
e^{-\frac{\hat{X}}{2}} f^*(\hat{X}).
\eeq
The modular operators appearing in this expression
satisfy the cocycle relation 
$\Delta_{\omega}^{-\frac12}
\Delta_{\varphi|\omega}^{\frac12}=
\Delta_{\omega|\varphi}^{-\frac12}\Delta_{\varphi}^{\frac12}$,
and both expressions are operators affiliated with 
$\mathcal{A}$ (see, e.g., \cite[Appendix C]{Jensen:2023yxy}). 
Because the flows generated by $h_\omega$ and $h_{\omega|\varphi}$
agree on operators affiliated with $\mathcal{A}$, 
we can replace $e^{-i\hat{P}h_\omega}$ with 
$e^{-i\hat{P}h_{\omega|\varphi}}$ in the above expression.  
This, along with the cocycle relation, 
then leads to the equivalent expression for the density 
matrix
\beq \label{eqn:rhoequiv}
\rho_{\tilde{\varphi}_f}
=
e^{-i\hat{P}h_{\omega|\varphi}}
f(\hat{X}+h_{\omega|\varphi})e^{-\hat{X}}\Delta_\varphi
f^*(\hat{X}+h_{\omega|\varphi})e^{i\hat{P}h_{\omega|\varphi}}
\eeq
For brevity, we now define 
\beq
h \defn h_{\omega|\varphi}, \quad 
\Delta \defn
\Delta_\varphi.
\eeq
The logarithm of this density matrix can be written
\begin{align}
    - \log \rho_{\tilde{\varphi}_f} 
    &= -e^{-i\hat{P}h} \log \left[ f(\hat{X}+h) e^{-\hat{X}}\Delta f^*(\hat{X}+h)\right]e^{i\hat{P}h} 
    \nonumber
    \\
    &= - e^{-i\hat{P}h} \left( -\hat{X}+\log \left[ f(\hat{X}+h) \Delta f^*(\hat{X}+h)\right]\right)e^{i\hat{P}h}
    \nonumber
    \\
    &= \hat{X}-h- \log \left[ f(\hat{X}) e^{-i\hat{P}h}
    \Delta e^{i\hat{P}h}f^*(\hat{X})\right]
    \nonumber
    \\
    &= \hat{X}-h - \log (g(\hat{X})\Delta )
    - \log W+ \log (g(\hat{X})\Delta ) 
    \label{eqn:logrho}
\end{align}
where we have defined
\beq
W \defn f(\hat{X}) e^{-i\hat{P}h}\Delta e^{i\hat{P}h} f^*(\hat{X}),
\qquad
g(\hat{X}) \defn |f(\hat{X})|^2.
\eeq

The terms $- \log W+ \log (g(\hat{X})\Delta )$ are 
responsible for the 
subleading corrections to the 
entropy as $\epsilon\rightarrow 0$ 
that we seek to control. As we will eventually compute the expectation
value of these terms in the state $|\varphi\rangle
\otimes |f\rangle$, we can make progress
on evaluating them by passing them through
the unital completely positive (UCP) map $\langle f|\cdot|f\rangle$,
i.e.\ taking the partial expectation value in the 
state $|f\rangle$. The result of this procedure is the 
following lemma:

\begin{lemma}[\cite{Faulkner2023}] \label{lem:Rdelta} The image
of the operator $-\log W+\log(g(\hat{X})\Delta)$
under the UCP map $\langle 
f|\cdot|f\rangle$ is given by 
\beq\label{eqn:entcorr}
\langle f|-\log W + \log(g(\hat{X})\Delta)|f\rangle
=
-R(\Delta,h) - \int_{-\infty}^\infty dz\, \delta(z,h)\log g(z)
\eeq
where 
\beq
\delta(z,h)\defn g(z+h) - g(z)
\eeq
and $R(\Delta,h)$
is a positive operator defined by
\beq
R(\Delta,h) \defn \int_{-\infty}^\infty dz\, g(z)
\int_0^\infty d\mu
\,\frac{1}{\mu+\Delta} \frac{\delta(z,h)}{g(z)} 
\left(
\frac{\mu}{\frac{\mu}{\Delta} + \frac{g(z+h)}{g(z)}} 
\right)\frac{\delta(z,h)}{g(z)} \frac{1}{\mu+\Delta}.
\eeq
\end{lemma}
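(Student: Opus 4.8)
The plan is to evaluate the left-hand side of \eqref{eqn:entcorr} through the integral representation of the logarithm, $\log M = \int_0^\infty d\mu\,(\tfrac{1}{1+\mu} - \tfrac{1}{\mu+M})$, which converts it into a difference of resolvents, the state-independent $\tfrac{1}{1+\mu}$ pieces cancelling: $\langle f|{-}\log W + \log(g(\hat{X})\Delta)|f\rangle = \int_0^\infty d\mu\,\langle f|(\tfrac{1}{\mu+W} - \tfrac{1}{\mu+g(\hat{X})\Delta})|f\rangle$. The second resolvent is immediate, since $g(\hat{X})\Delta$ commutes with $\hat{X}$ and is therefore block-diagonal in the position eigenbasis of $L^2(\mathbb{R})$, giving $\langle f|\tfrac{1}{\mu+g(\hat{X})\Delta}|f\rangle = \int dz\,g(z)\tfrac{1}{\mu+g(z)\Delta} = \int dz\,\tfrac{1}{\Delta+\mu/g(z)}$. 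For the first resolvent I would first conjugate away the momentum operators: because $e^{i\hat{P}h}f(\hat{X})e^{-i\hat{P}h} = f(\hat{X}+h)$, one has $\tfrac{1}{\mu+W} = e^{-i\hat{P}h}\tfrac{1}{\mu+W'}e^{i\hat{P}h}$ with $W' \defn f(\hat{X}+h)\Delta f(\hat{X}+h)$, which again commutes with $\hat{X}$. Using the operator-valued matrix element $\langle X|e^{i\hat{P}h}|f\rangle = f(X+h)$ together with the pull-through identity $a\,\tfrac{1}{\mu+a\Delta a}\,a = \tfrac{1}{\Delta+\mu a^{-2}}$ for $a=f(z+h)$, this collapses to $\langle f|\tfrac{1}{\mu+W}|f\rangle = \int dz\,\tfrac{1}{\Delta+\mu/g(z+h)}$.

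Subtracting the two contributions yields the key representation
\begin{equation}
\langle f|{-}\log W + \log(g(\hat{X})\Delta)|f\rangle = \int_{-\infty}^{\infty} dz \int_0^\infty d\mu\left(\frac{1}{\Delta+\mu/g(z+h)} - \frac{1}{\Delta+\mu/g(z)}\right).
\end{equation}
Rescaling $\mu\to g(z)\mu$ and setting $r\defn g(z+h)/g(z)$ brings this to $\int dz\,g(z)\int_0^\infty d\mu\,(\tfrac{1}{\Delta+\mu/r} - \tfrac{1}{\Delta+\mu})$, in which the bare resolvent $\tfrac{1}{\mu+\Delta}$ already matches the outer factors of $R(\Delta,h)$. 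I would then apply the second resolvent identity to the integrand and carry out the $\mu$-integral: the contribution reorganizes into the two terms of \eqref{eqn:entcorr}, with the ``commuting'' part reassembling into $-\int dz\,\delta(z,h)\log g(z)$ and the genuinely non-commutative remainder resumming into $-R(\Delta,h)$, producing exactly the middle factor $\mu/(\mu/\Delta + r)$ and the two factors $\delta(z,h)/g(z) = r-1$. Positivity of $R$ is then manifest from this same form: for each $z,\mu$ the integrand is $C^{\dagger}(\text{positive})\,C$ with $C = \tfrac{\delta(z,h)}{g(z)}\tfrac{1}{\mu+\Delta}$, self-adjoint end factors, and a positive middle factor $\tfrac{\mu}{\mu/\Delta + r}$ (the inverse of the sum of the positive operators $\mu\Delta^{-1}$ and $r$), all weighted by $g(z)>0$.

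The hard part will be the reorganization in the last step. The $\mu$- and $z$-integrals of $\tfrac{1}{\Delta+\mu/g(z+h)}$ and $\tfrac{1}{\Delta+\mu/g(z)}$ are each only logarithmically convergent, and the finite answer appears only after combining them and invoking $\int dz\,\delta(z,h)=0$ to cancel the $\log\mu$ divergences; performing this bookkeeping while respecting the non-commutativity of $\Delta$ and $h$ is delicate, since one cannot simply rescale the scalar variable $\mu$ by the operator $r$ (which would otherwise trivialize the difference, as it does in the commuting case where $W=g(\hat{X})\Delta$ and both sides vanish). Alongside this, I would need to justify the underlying operator manipulations — interchanging the unital completely positive map $\langle f|\cdot|f\rangle$ with the $\mu$-integral, the formally unbounded inverse $f(z+h)^{-1}$ entering the pull-through identity, and the domains on which $h$, $\Delta$, and the resolvents act — for which the sesquilinear-form framework of Section~\ref{subsec:sesq} is the appropriate setting.
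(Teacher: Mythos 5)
Your setup coincides with the paper's own proof: the integral representation of the logarithm, the evaluation of $\langle f|\tfrac{1}{\mu+g(\hat{X})\Delta}|f\rangle$ by diagonality in $\hat{X}$, and the reduction of $\langle f|\tfrac{1}{\mu+W}|f\rangle$ to $\int dz\,\tfrac{1}{\Delta+\mu/g(z+h)}$ (the paper reaches this by writing $\tfrac{1}{f(\hat{X})}|f\rangle=\sqrt{2\pi}\,|0_P\rangle$ rather than by your conjugation-plus-pull-through, but the algebra is the same), as is your positivity argument for $R(\Delta,h)$. The gap is in the middle step, and it is not mere bookkeeping. Your ``key representation'' is written as $\int dz\int_0^\infty d\mu\,(\cdots)$ with the $\mu$-integral innermost, but at fixed $z$ that inner integral diverges: for large $\mu$,
\begin{equation*}
\frac{1}{\Delta+\mu/g(z+h)}-\frac{1}{\Delta+\mu/g(z)}
=\frac{g(z+h)-g(z)}{\mu}+O(\mu^{-2})
=\frac{\delta(z,h)}{\mu}+O(\mu^{-2}),
\end{equation*}
and $\delta(z,h)\neq 0$ as an operator, so the $1/\mu$ tail is killed only \emph{after} integrating over $z$, via $\int dz\,\delta(z,h)=0$. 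The iterated integral therefore exists only with $\mu$ outermost and $z$ innermost --- the paper emphasizes exactly this point --- and since the double integral is not absolutely convergent, Fubini cannot be invoked to reverse the order. Consequently your first move, the $z$-dependent rescaling $\mu\to g(z)\mu$, which presupposes that $\mu$ is the inner variable, is not available as stated.

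Second, the step you defer as ``the hard part'' is the actual content of the lemma, and the route you sketch does not produce it: a single resolvent identity applied to your rescaled integrand yields only \emph{one} factor of $\delta(z,h)$, namely $\tfrac{1}{\Delta+\mu/r}\,\mu\,(1-r^{-1})\,\tfrac{1}{\mu+\Delta}$ with $1-r^{-1}=\delta(z,h)/g(z+h)$, whereas $R(\Delta,h)$ is quadratic in $\delta(z,h)$. The paper's mechanism is a \emph{double} use of the subtraction trick, ordered so that every manipulation is performed on a convergent expression: (i) apply the resolvent identity once to get $\tfrac{1}{\lambda+\Delta g(z)}\delta(z,h)\tfrac{\lambda}{\lambda+\Delta g(z+h)}$, and split $\tfrac{\lambda}{\lambda+\Delta g(z+h)}=1-\tfrac{\Delta g(z+h)}{\lambda+\Delta g(z+h)}$; the leftover ``$1$'' term, after subtracting the counterterm $\tfrac{1}{\lambda+\Delta}\delta(z,h)$ (free because $\int dz\,\delta=0$), has a convergent $\lambda$-integral yielding $-\int dz\,\delta(z,h)\log g(z)$; (ii) the remaining term is absolutely convergent, so there --- and only there --- one may swap the order of integration, substitute $\mu=\lambda/g(z)$, subtract $\tfrac{1}{\mu+\Delta}$ (again free by $\int dz\,\delta=0$), and apply the resolvent identity a second time, producing the second factor of $\delta(z,h)$ and hence the manifestly positive form of $R$. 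Without this two-stage subtraction, your plan stalls at an expression that is neither convergent term-by-term nor of the required quadratic form, so the proposal as written has a genuine gap precisely where the proof's main work lies.
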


\begin{proof}
We will evaluate the logarithms using the integral expression
\begin{align}
-\log A +\log B= \int_0^\infty d\lambda \left(\frac{1}{\lambda +A}
- \frac{1}{\lambda +B} \right).
\end{align}
Noting that 
\begin{align}
\frac{1}{\lambda + W} 
= 
\frac{1}{f^*(\hat{X})} \frac{1}{\frac{\lambda}{g(\hat{X})}
+ e^{-i\hat{P}h} \Delta e^{i\hat{P}h}}
\frac{1}{f(\hat{X})}
=
\frac{1}{f^*(\hat{X})} e^{-i\hat{P}h}
\frac{1}{\frac{\lambda}{g(\hat{X}+h)} + \Delta}
e^{i\hat{P}h} \frac{1}{f(\hat{X})},
\end{align}
we find
\begin{align}
\begin{aligned}
-\log W + \log (g(\hat{X})\Delta)
&=
\int_0^\infty d\lambda\left(\frac{1}{\lambda + W}-
\frac{1}{\lambda + g(\hat{X}) \Delta}\right)
\\
&=
\int_0^\infty d\lambda\,\frac{1}{f^*(\hat{X})}
\left(e^{-i\hat{P}h} \frac{1}{\frac{\lambda}{g(\hat{X}+h)}
+ \Delta} e^{-i\hat{P}h}
- \frac{1}{\frac{\lambda}{g(\hat{X})} + \Delta}
\right) \frac{1}{f(\hat{X})}.
\end{aligned}
\end{align}
Next, we pass the operator through the UCP map
$\langle f|\cdot |f\rangle$. Using
\begin{align}
\frac{1}{f(\hat{X})}|f\rangle =\sqrt{2\pi}|0_P\rangle,
\end{align}
where $\ket{0_P}$ is the (improper) eigenstate of $\hat{P}$ with zero eigenvalue,
and recalling the operator identity
\begin{align}
A^{-1} + B^{-1} = A^{-1} (A+B)B^{-1},
\end{align}
we obtain
\begin{align}\begin{aligned}
\langle f|&-\log W + \log (g(\hat{X})\Delta)|f\rangle
=
2\pi \int_0^\infty d\lambda \,\langle 0_P|
\left(
e^{-i\hat{P}h}\frac{1}{\frac{\lambda}{g(\hat{X}+h)}
+\Delta} e^{i\hat{P}h} 
-\frac{1}{\frac{\lambda}{g(\hat{X})}+\Delta}
\right)|0_P\rangle
\\
&=
\int_0^\infty d\lambda \int dz
\left(\frac{1}{\frac{\lambda}{g(z+h)} +\Delta}
-\frac{1}{\frac{\lambda}{g(z)}+\Delta}
\right)
\\
&=
\int_0^\infty d\lambda\int dz\left(
\frac{1}{\frac{\lambda}{g(z)} +\Delta}
\left(\frac{\lambda}{g(z)}-\frac{\lambda}{g(z+h)}\right)
\frac{1}{\frac{\lambda}{g(z+h)} +\Delta}
\right)
\\
&=
\int_0^\infty d\lambda\int dz\left(
\frac{1}{\lambda + \Delta g(z)}
\delta(z,h)
\frac{\lambda}{\lambda +\Delta g(z+h)}
\right)
\\
&=
\int_0^\infty d\lambda\int dz\left(
\frac{-1}{\lambda +\Delta g(z)} 
\delta (z,h)\Delta g(z+h) 
\frac{1}{\lambda + \Delta g(z+h)} 
+\frac{1}{\lambda + \Delta g(z)}\delta(z,h)
\right) 
\label{eqn:convergent2}
\end{aligned}
\end{align}
where the integration over $z$ is over the entire real line. It is important that the $\lambda$ integral is 
on the outside, since it does not converge at 
fixed $z$; it only converges after integrating
the expression over $z$. The last line then 
separates off a piece where the $\lambda$ 
integral does converge at fixed $z$, plus an
additional term in which the $\lambda$ integral
can be performed explicitly. Because $\delta(z,h)$ averages to zero
\begin{align}
\int dz\,\delta(z,h) = \int dz\,(g(z+h)-g(z)) = 0,
\end{align}
we can subtract $\frac{1}{\lambda + \Delta} 
\delta(z,h)$ from the leftover term and perform
the $\lambda$ integral,
\begin{align}
\begin{aligned}
\label{eqn:leftover}
\int_0^\infty d\lambda \int dz
\frac{1}{\lambda +\Delta g(z)} \delta(z,h)
&=
\int_0^\infty d\lambda \int dz 
\left(\frac{1}{\lambda +\Delta g(z)} - 
\frac{1}{\lambda
+\Delta}\right)\delta(z,h)
\\
&= -\int dz\,\delta(z,h) \log g(z).
\end{aligned}
\end{align}
For the first term in (\ref{eqn:convergent}), we can 
switch the order of integration
since the $\lambda$ integral is now convergent
at fixed $z$, which allows 
the substitution $\mu = \frac{\lambda}{g(z)}$ 
\begin{align}
\begin{aligned}
\label{eqn:convergent}
\int_0^\infty d\lambda \int dz &
\,\frac{-1}{\lambda +\Delta g(z)} 
\delta (z,h)\Delta g(z+h) 
\frac{1}{\lambda + \Delta g(z+h)}
\\
&=
-\int dz \int_0^\infty d\mu \frac{1}{\mu+\Delta}
\delta(z,h) \Delta \frac{1}{\frac{\mu g(z)}{g(z+h)}
+ \Delta}
\\
&=
-\int dz \int_0^\infty d\mu
\frac{1}{\mu+\Delta}
\delta(z,h) \Delta 
\left(
\frac{1}{\frac{\mu g(z)}{g(z+h)}
+ \Delta}
-\frac{1}{\mu+\Delta}\right)
\\
&=
-\int dz \int_0^\infty d\mu
\frac{1}{\mu+\Delta}
\delta(z,h) \Delta 
\frac{1}{\frac{\mu g(z)}{g(z+h)} +\Delta}
\mu\left(1-\frac{g(z)}{g(z+h)}\right) \frac{1}{\mu+\Delta}
\\
&=
-\int dz\,g(z) \int_0^\infty d\mu 
\frac{1}{\mu+\Delta} \frac{\delta(z,h)}{g(z)}
\left(\frac{\mu}{\frac{\mu}{\Delta} + \frac{g(z+h)}{g(z)}}\right)
\frac{\delta(z,h)}{g(z)}
\frac{1}{\mu+\Delta}
\\
&= - R(\Delta,h).
\end{aligned}
\end{align}
Together, \eqref{eqn:leftover} and \eqref{eqn:convergent} lead to \eqref{eqn:entcorr}.

Positivity of $R(\Delta,h)$ follows from the fact that the term inside the parentheses in the integrand takes the 
form of a parallel sum
\begin{align}
\frac{1}{\frac{\mu}{\Delta} + \frac{g(z+h)}{g(z)}} = 
\left(\frac{\Delta}{\mu}: \frac{g(z)}{g(z+h)}\right)
\end{align}
which is known to map positive operators to positive operators,
i.e.\ if $A>0$ and $B>0$, then $(A:B)\equiv
(A^{-1}+B^{-1})^{-1}>0$ \cite{simon2019loewner}. Since $\frac{\Delta}{\mu}>0$ and $\frac{g(z)}{g(z+h)}>0$,
the above operator is positive. The terms outside the 
parentheses are just a conjugation of the form $C^\dagger (\cdot) C$,
which preserves positivity. Integrating over $\mu$ and $z$
with respect to a positive measure
also preserves positivity.
\end{proof}

This lemma can then be applied to derive bounds on
the subleading corrections of the entropy of the state $|\tilde{\varphi}_f\rangle$.  These bounds 
can be formulated as follows:
\begin{lemma}\label{lem:entropybounds}
The entropy of the density matrix (\ref{eqn:rhoequiv}) 
satisfies the bounds
\beq \label{eqn:entbounds}
0\leq S(\rho_{\tilde{\varphi}_f}) +S_{\text{rel}}(\varphi\parallel\omega) 
- \vev{\hat{X}-\log g(\hat{X})}_f \leq
-\vev{\int_{-\infty}^\infty dz\, g(z)
\log\left(\frac{g(z-h)}{g(z)}\right)}_{\varphi}.
\eeq
\end{lemma}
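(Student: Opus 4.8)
The plan is to compute $S(\rho_{\tilde{\varphi}_f}) = \langle\tilde{\varphi}_f|(-\log\rho_{\tilde{\varphi}_f})|\tilde{\varphi}_f\rangle$ directly from the decomposition (\ref{eqn:logrho}) of $-\log\rho_{\tilde{\varphi}_f}$ into the three ``leading'' terms $\hat{X}-h-\log(g(\hat{X})\Delta)$ and the ``correction'' $-\log W+\log(g(\hat{X})\Delta)$, evaluate each piece in the product state $|\tilde{\varphi}_f\rangle=|\varphi\rangle\otimes|f\rangle$, and then extract the two inequalities from two different repackagings of the resulting expression. For the leading terms: since $\hat{X}$ acts only on $L^2(\mathbb{R})$, it contributes $\langle\hat{X}\rangle_f$; since $h=h_{\omega|\varphi}$ acts only on $\mathcal{H}$, it contributes $-\langle\varphi|h_{\omega|\varphi}|\varphi\rangle=-S_{\text{rel}}(\varphi\parallel\omega)$ by (\ref{eq:defsrel}); and because $\hat{X}$ and $\Delta=\Delta_\varphi$ commute, $\log(g(\hat{X})\Delta)=\log g(\hat{X})+\log\Delta_\varphi$, whose first piece gives $\langle\log g(\hat{X})\rangle_f$ while the second vanishes since $\Delta_\varphi|\varphi\rangle=|\varphi\rangle$ forces $\langle\varphi|\log\Delta_\varphi|\varphi\rangle=0$. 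The leading terms thus sum to $\langle\hat{X}-\log g(\hat{X})\rangle_f-S_{\text{rel}}(\varphi\parallel\omega)$.

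Next I would handle the correction by first applying the partial expectation $\langle f|\cdot|f\rangle$ via Lemma \ref{lem:Rdelta}, then taking the remaining $\langle\varphi|\cdot|\varphi\rangle$, obtaining $-\langle R(\Delta,h)\rangle_\varphi-\int dz\,\langle g(z+h)-g(z)\rangle_\varphi\log g(z)$. A shift of the $c$-number integration variable, carried out spectrally in $h$ and justified by $\int dz\,\delta(z,h)=0$, recasts the second term as $-\langle\int dz\,g(z)\log(g(z-h)/g(z))\rangle_\varphi$. Collecting everything yields
\[
C:=S(\rho_{\tilde{\varphi}_f})+S_{\text{rel}}(\varphi\parallel\omega)-\langle\hat{X}-\log g(\hat{X})\rangle_f=-\langle R(\Delta,h)\rangle_\varphi-\left\langle\int_{-\infty}^\infty dz\,g(z)\log\frac{g(z-h)}{g(z)}\right\rangle_\varphi .
\]
Since $R(\Delta,h)\geq0$ by Lemma \ref{lem:Rdelta}, discarding $-\langle R\rangle_\varphi\leq0$ produces the upper bound of (\ref{eqn:entbounds}) immediately.

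For the lower bound $C\geq0$ I would switch viewpoints and recognize $C$ as a difference of relative entropies. Writing the vacuum reference density matrix $\rho_{\tilde{\omega}_f}=g(\hat{X})e^{-\hat{X}}$ (the natural generalization of (\ref{eqn:rhovac}) to the wavefunction $f$, which indeed has unit trace under (\ref{eq:IIinfty}) and is affiliated with the crossed product since it is a function of $\hat{X}$), we have $-\log\rho_{\tilde{\omega}_f}=\hat{X}-\log g(\hat{X})$, whence $\langle\hat{X}-\log g(\hat{X})\rangle_f=-\langle\tilde{\varphi}_f|\log\rho_{\tilde{\omega}_f}|\tilde{\varphi}_f\rangle$ and therefore $C=S_{\text{rel}}(\varphi\parallel\omega)-S_{\text{rel}}(\tilde{\varphi}_f\parallel\tilde{\omega}_f)$, where the second relative entropy is computed on the crossed-product algebra (\ref{eq:crossalg}). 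Because that algebra is a subalgebra of the von Neumann tensor product $\mathcal{A}\,\overline{\otimes}\,B(L^2(\mathbb{R}))$, on which the product states $\varphi\otimes|f\rangle\langle f|$ and $\omega\otimes|f\rangle\langle f|$ have relative entropy $S_{\text{rel}}(\varphi\parallel\omega)$ by additivity together with $S_{\text{rel}}(f\parallel f)=0$, and because $\tilde{\varphi}_f,\tilde{\omega}_f$ are precisely the restrictions of these product states, monotonicity of relative entropy under restriction to a subalgebra gives $S_{\text{rel}}(\tilde{\varphi}_f\parallel\tilde{\omega}_f)\leq S_{\text{rel}}(\varphi\parallel\omega)$, i.e.\ $C\geq0$.

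The step I expect to be the main obstacle is rigor in the correction term: as flagged in the proof of Lemma \ref{lem:Rdelta}, the $\lambda$ (equivalently $\mu$) integral and the variable-shift identity only converge \emph{after} integrating over $z$, so I must ensure that the term-by-term expectations in $|\tilde{\varphi}_f\rangle$ genuinely exist and that the spectral shift in the unbounded operator $h$ is legitimate. A secondary subtlety is justifying that the density-matrix relative entropy on the type $\ttwo$ crossed product coincides with the Araki relative entropy (\ref{eq:defsrel}) to which monotonicity under subalgebra restriction applies; this is standard for semifinite algebras but should be stated. Throughout I assume $S_{\text{rel}}(\varphi\parallel\omega)<\infty$ and that $\langle\hat{X}\rangle_f$ and $\langle\log g(\hat{X})\rangle_f$ are finite, which holds in particular for the $\sech$ family (\ref{eq:f(X)}).
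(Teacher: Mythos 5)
Your proposal is correct and takes essentially the same route as the paper's proof: expand $\langle -\log\rho_{\tilde{\varphi}_f}\rangle_{\tilde{\varphi}_f}$ using the decomposition (\ref{eqn:logrho}) and Lemma \ref{lem:Rdelta}, get the upper bound from positivity of $R(\Delta,h)$ together with the shift identity $\int dz\,\delta(z,h)\log g(z)=\int dz\,g(z)\log\bigl(g(z-h)/g(z)\bigr)$, and get the lower bound by recognizing the remaining combination as $S_{\text{rel}}(\varphi\parallel\omega)-S_{\text{rel}}(\tilde{\varphi}_f\parallel\tilde{\omega}_f)\geq 0$ via monotonicity of relative entropy, using $\rho_{\tilde{\omega}_f}=g(\hat{X})e^{-\hat{X}}$. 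The only (immaterial) variation is how monotonicity is instantiated: you restrict the product states $\varphi\otimes|f\rangle\langle f|$ and $\omega\otimes|f\rangle\langle f|$ from $\mathcal{A}\,\overline{\otimes}\,B(L^2(\mathbb{R}))$ to the subalgebra $\mathcal{A}\rtimes\mathbb{R}$, whereas the paper applies monotonicity to the quantum channel dual to the UCP map $\langle f|\cdot|f\rangle$ --- two equivalent applications of the same theorem.
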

\begin{proof}
From equation (\ref{eqn:logrho}) and lemma \ref{lem:Rdelta}, 
the entropy can be written
\begin{align}
S(\rho_{\tilde{\varphi}_f}) = \vev{-\log\rho_{\tilde{\varphi}_f}}_{\tilde{\varphi}_f}
=
-S_{\text{rel}}(\varphi\parallel\omega)+
\vev{\hat{X}-\log g(\hat{X})}_f
 + \vev{-R(\Delta,h)-\int_{-\infty}^\infty dz\, \delta(z,h)
\log g(z)}_\varphi.
\end{align}
The upper bound in (\ref{eqn:entbounds}) follows immediately
from the positivity of $R(\Delta,h)$, after noting that
\beq
\int dz \,\delta(z,h)\log g(z)
=
\int dz(g(z+h)-g(z))\log g(z) = 
\int dz\, g(z)\log\left(\frac{g(z-h)}{g(z)}\right).
\eeq
For the lower bound, we note that because the map $\langle f|\cdot|f
\rangle$ sending an element of the crossed product
algebra $\mathcal{A} \rtimes \mathbb{R}$ to $\mathcal{A}$ is UCP, the dual map sending the state 
$|\varphi\rangle $ to $|\tilde{\varphi}_f\rangle = 
|\varphi\rangle\otimes|f\rangle$ is a quantum channel.  
Relative entropy must decrease 
under the action of a quantum channel \cite{Uhlmann:1976me, Petz1993}, 
so by using the form of the vacuum density
matrix $\rho_{\tilde{\omega}_f} = g(\hat{X})e^{-\hat{X}}$
given in (\ref{eqn:rhovac}), we find that
\begin{align}
S_\text{rel}(\varphi\parallel\omega)\geq &\;
S_\text{rel}(\tilde{\varphi}_f\parallel\tilde{\omega}_f)
\label{eq:Srel} \\
&=
-S(\rho_{\tilde{\varphi}_f}) -\vev{\log\rho_{\tilde{\omega}_f}}_{\tilde{\varphi}_f}
\\
&=
-S(\rho_{\tilde{\varphi}_f}) +\vev{\hat{X}-\log g(\hat{X}}_f .
\end{align}
where, in \eqref{eq:Srel}, the left-hand side is the relative entropy of the states $\varphi$ and $\omega$ for the QFT algebra $\alg$ 
and the right-hand side corresponds to relative entropy of $\tilde{\varphi}_{f}$ and $\tilde{\omega}_{f}$ on the crossed product algebra $\alg\rtimes \mathbb{R}$. Rearranging terms then verifies the lower bound in 
(\ref{eqn:entbounds}).
\end{proof}

We now specialize to a specific choice of the wavefunction
$f(X)$ given by
\beq
f(X) = \sqrt{\frac{\epsilon}{2}} \sech\Big(\epsilon(X-X_0)\Big)
\eeq
For this choice, the integral for the upper bound 
appearing in lemma \ref{lem:entropybounds} evaluates to
\beq
-\epsilon \int_{-\infty}^\infty dz\,\sech^2\Big(
\epsilon(z-h-X_0)\Big) 
\log\left(\frac{\sech(\epsilon(z-h-X_0))}{\sech(\epsilon(z-X_0))}\right)
=
2(\epsilon h\coth(\epsilon h) - 1) \leq 2\epsilon|h|.
\eeq
Hence, the entropy of $\rho_{\tilde{\varphi}_f}$ is upper-bounded
by $2\epsilon\vev{|h|}_\varphi$.  This can be related to the 
relative entropy by decomposing the relative modular Hamiltonian
into the positive and negative components of
its spectrum,
$h= h_+ + h_-$ (note that this split
is unrelated to the split into one-sided
modular Hamiltonians considered in (\ref{eq:modHamsplit})). Then $|h| = h_+-h_-$, and 
\beq
\vev{|h|}_\varphi = S_\text{rel}(\varphi\parallel\omega) +
2\vev{-h_-}_\varphi.
\eeq
To bound the expectation value of $-h_-$, 
we note that the inequality $-h\leq e^{-h}-1$
leads upon
projecting onto the negative part of the $h$ spectrum with
the operator $P_-$ to
$-h_-\leq P_- e^{-h}P_- - P_-\leq e^{-h} = \Delta_{\omega|\varphi}$,
which then implies
\beq
\vev{-h_-}_\varphi \leq \langle\varphi |
\Delta_{\omega|\varphi}|\varphi\rangle
=\langle\omega|\omega\rangle = 1.
\eeq
Combining these results, we see that an upper bound
for the entropy correction displayed in (\ref{eqn:entbounds})
is given by $2\epsilon(S_\text{rel}(\varphi\parallel\omega)+2)$,
which, for states with finite relative entropy,
vanishes as $\epsilon\rightarrow 0$.

Collecting these results, we can now prove the main theorem
of this work.
\begin{theorem}
\label{thm:main}
Let $|\varphi\rangle, |\omega\rangle$ be cyclic-separating vector states for the
von Neumann algebra $\mathcal{A}$ of quantum fields in a causally complete
spacetime subregion.  Assume that there exists a decomposition of the modular
Hamiltonian $h_\omega = h_\omega^{\mathcal{A}}+h_{\omega}^{\mathcal{A}'}$,
where $h_\omega^{\mathcal{A}}$, $h_{\omega}^{\mathcal{A}'}$ are densely-defined
Hermitian forms affiliated with the respective algebras $\mathcal{A}$, $\mathcal{A}'$,
and that $\langle\omega| h_\omega^{\mathcal{A}}|\omega\rangle = 
\langle \omega| h_{\omega}^{\mathcal{A}'}|\omega\rangle = 0$.  Let 
$|\tilde\varphi_\epsilon\rangle$, $|\tilde\omega_\epsilon\rangle$ be the associated 
regulated states for the crossed product algebra, defined by the prescription given above 
equation (\ref{eq:f(X)}).  If $|\varphi\rangle$ is in the domain of $h^{\mathcal{A}}_{\omega}$ and $S_\text{rel}
(\varphi\parallel\omega)$ is finite, then the limit 
\beq \label{eqn:delSvndefn}
\Delta S_{\text{vN}}(\varphi,\omega)=\displaystyle-\lim_{\epsilon\rightarrow 0}
\Tr\left(\rho_{\tilde\varphi_\epsilon} \log\rho_{\tilde\varphi_\epsilon}
-\rho_{\tilde\omega_\epsilon}\log\rho_{\tilde\omega_\epsilon}\right)
\eeq
 exists,
and satisfies
\beq \label{eqn:DSvN}
\Delta S_{\text{vN}}(\varphi,\omega)=\langle\varphi|h_\omega^{\mathcal{A}}|\varphi\rangle
- S_{\text{rel}}(\varphi\parallel\omega).
\eeq
\end{theorem}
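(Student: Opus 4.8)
The plan is to evaluate the two entropies appearing in (\ref{eqn:delSvndefn}) separately with the help of Lemma~\ref{lem:entropybounds} and then show that their individually divergent pieces cancel in the difference, leaving the finite expression (\ref{eqn:DSvN}). First I would record the vacuum entropy exactly. Writing $g_\omega(\hat X)\defn|f_{\omega,\epsilon}(\hat X)|^2$, the vacuum density matrix $\rho_{\tilde\omega_\epsilon}=g_\omega(\hat X)e^{-\hat X}$ from (\ref{eqn:rhovac}) gives $-\log\rho_{\tilde\omega_\epsilon}=\hat X-\log g_\omega(\hat X)$, so
\[
S_{\vn}(\rho_{\tilde\omega_\epsilon})
=\vev{\hat X}_{f_\omega}-\int_{\mathbb R}dX\,g_\omega(X)\log g_\omega(X)
=X_\omega-\int_{\mathbb R}dX\,g_\omega(X)\log g_\omega(X),
\]
and by the hypothesis $\langle\omega|h_\omega^{\mathcal A}|\omega\rangle=0$ the mean is $X_\omega=0$.

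For the excited state I would apply Lemma~\ref{lem:entropybounds} with $f=f_{\varphi,\epsilon}$ and $g_\varphi(\hat X)\defn|f_{\varphi,\epsilon}(\hat X)|^2$, which yields
\[
S_{\vn}(\rho_{\tilde\varphi_\epsilon})
=-S_{\text{rel}}(\varphi\parallel\omega)+X_\varphi
-\int_{\mathbb R}dX\,g_\varphi(X)\log g_\varphi(X)+E_\epsilon,
\]
where $\vev{\hat X}_{f_\varphi}=X_\varphi=\langle\varphi|h_\omega^{\mathcal A}|\varphi\rangle$ by the mean prescription (\ref{eq:mean}) and $E_\epsilon$ is the correction term, squeezed between $0$ and the upper bound of (\ref{eqn:entbounds}). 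The crucial observation is that the differential-entropy integrals $-\int dX\,g\log g$ depend only on the \emph{shape} of $g$ and not on its mean, and that $g_\varphi$, $g_\omega$ are the \emph{same} $\sech^2$ profile at common width $\epsilon$, translated by $X_\varphi$ and $X_\omega$ respectively; hence the two integrals are identical and cancel in $S_{\vn}(\rho_{\tilde\varphi_\epsilon})-S_{\vn}(\rho_{\tilde\omega_\epsilon})$, even though each diverges logarithmically as $\epsilon\to0$. After this cancellation,
\[
-\Tr\!\left(\rho_{\tilde\varphi_\epsilon}\log\rho_{\tilde\varphi_\epsilon}
-\rho_{\tilde\omega_\epsilon}\log\rho_{\tilde\omega_\epsilon}\right)
=\langle\varphi|h_\omega^{\mathcal A}|\varphi\rangle
-S_{\text{rel}}(\varphi\parallel\omega)+E_\epsilon .
\]

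To finish I would take $\epsilon\to0$. Using the explicit $\sech$ evaluation already carried out above, the upper bound on $E_\epsilon$ equals $\vev{2(\epsilon h_{\omega|\varphi}\coth(\epsilon h_{\omega|\varphi})-1)}_\varphi\leq2\epsilon\vev{|h_{\omega|\varphi}|}_\varphi\leq2\epsilon\,(S_{\text{rel}}(\varphi\parallel\omega)+2)$, which vanishes as $\epsilon\to0$ whenever $S_{\text{rel}}(\varphi\parallel\omega)$ is finite; since the two remaining terms are manifestly $\epsilon$-independent, the limit exists and equals (\ref{eqn:DSvN}). The main obstacle is not any single computation but the simultaneous bookkeeping of two quantities that each diverge: one must be certain that the divergent differential entropy is genuinely common to both states, which is precisely why the mean prescription (\ref{eq:mean}) together with the normalization $\langle\omega|h_\omega^{\mathcal A}|\omega\rangle=0$ are indispensable---they guarantee that $g_\varphi$ and $g_\omega$ differ only by a rigid translation---and that the correction $E_\epsilon$ is controlled uniformly by the finite-relative-entropy assumption so that it drops out of the limit.
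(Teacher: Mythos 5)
Your proposal is correct and follows essentially the same route as the paper's own proof: both rest on Lemma \ref{lem:entropybounds}, the mean prescription $X_\varphi = \langle\varphi|h_\omega^{\mathcal{A}}|\varphi\rangle$ with $X_\omega = 0$, and the bound $\mathcal{E} \leq 2(S_{\text{rel}}(\varphi\parallel\omega)+2)$ from the $\sech$ evaluation. The only cosmetic difference is that you argue the divergent pieces cancel via translation invariance of the differential entropy $-\int dX\, g\log g$, whereas the paper evaluates both terms explicitly to $-\log(2\epsilon)+2$; these are the same computation.
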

\begin{proof}
The regulated entropy $S(\rho_{\tilde\varphi_\epsilon})
=-\Tr\rho_{\tilde\varphi_\epsilon} \log\rho_{\tilde\varphi_\epsilon}$ may be expressed
\beq \label{eqn:Srhoep}
S(\rho_{\tilde\varphi_\epsilon}) = -S_\text{rel}(\varphi\parallel\omega)
+\vev{\hat{X}-\log g(\hat{X})}_{\tilde\varphi_\epsilon} + \epsilon\mathcal{E}
\eeq
where, by lemma \ref{lem:entropybounds}, 
the error term $\mathcal{E}$ is bounded between $0$ and 
the right hand side of (\ref{eqn:entbounds}).
Using the regulated probability distribution
\beq
g(X) = \frac{\epsilon}{2}\sech^2\left(X-\vev{h_\omega^{\mathcal{A}}}_\varphi\right),
\eeq
which is chosen according to the prescription described
in equation (\ref{eq:f(X)}), the upper bound for $\mathcal{E}$ can be
taken to be $2(S_\text{rel}(\varphi\parallel\omega) +2)$ 
by the above discussion.  

This choice of wavefunction then gives
\beq
\vev{\hat{X}-\log g(\hat{X})}_{\tilde\varphi_\epsilon} = \vev{h_\omega^{\mathcal{A}}}_\varphi
-\log(2\epsilon)+2.
\eeq
On the other hand, using the density matrix 
(\ref{eqn:rhovac}), the vacuum entropy $S(\rho_{\tilde\omega_\epsilon}) = -\Tr \rho_{\tilde\omega_\epsilon}\log\rho_{\tilde\omega_\epsilon}$ evaluates exactly to
\beq\label{eqn:Svac}
S(\rho_{\tilde\omega_\epsilon}) = -\log(2\epsilon) + 2.
\eeq
Hence, we see that the divergent $\log(2\epsilon)$ term cancels out in the entropy 
difference, and the term $\epsilon \mathcal{E}$ in (\ref{eqn:Srhoep}) vanishes as $\epsilon\rightarrow 0$.
Therefore, the entropy difference $S(\rho_{\tilde\varphi_\epsilon}) 
- S(\rho_{\tilde\omega_\epsilon})$ has a finite limit as $\epsilon\rightarrow 0$,
and is given by (\ref{eqn:DSvN}).
\end{proof}

Although the assumption that the state $|\varphi\rangle$ was cyclic-separating
was necessary in the above proof of theorem \ref{thm:main}, it seems likely
that this requirement can be lifted to consider general states 
$|\varphi\rangle$.  The main subtleties involve determining the density
matrix on the crossed product algebra without assuming that $|\varphi\rangle$
is cyclic-separating, and then checking that the errors 
in the entropy expresssion can still be bounded.  We leave this 
as an interesting question for future work. 

\section{Proof of the Bekenstein Bound}
\label{sec:Bekbound}
If a spacetime has a complete time-like Killing vector field $\xi$, a state $\omega$ is called \textit{globally KMS} if it obeys the KMS condition \cite{Kubo:1957mj,MS_1959} with respect to the flow generated by the Killing field. In such a case, assuming the existence of stress tensor $T_{\mu \nu}$, the modular Hamiltonian of $\ket{\omega}$ is
\begin{align}
    h_{\omega} =\frac{2\pi }{\kappa} \int_{\Sigma} \sqrt{h}d^{d-1}x ~n^{\mu}\xi^{\nu}T_{\mu \nu}
\end{align}
where $\Sigma$ is a complete Cauchy surface for the spacetime, $\sqrt{h}$ is the volume form, and $n^{\mu}$ is the unit normal to $\Sigma$. Three well-known examples of globally KMS states (see e.g., \cite{Hollands:2014eia}) are the Minkowski vacuum restricted to the Rindler wedge, the Bunch-Davies state restricted to the static patch of de Sitter, and the Hartle-Hawking state restricted to the exterior of the black hole.

\begin{theorem}[Bekenstein Bound]
\label{thm:Bekbound}
Let $|\varphi\rangle, |\omega\rangle$ be 
vector states that are cyclic and separating for the
von Neumann algebra $\mathcal{A}$ of quantum fields in a causally complete
spacetime subregion $\mathcal{R}$. Let $\omega$ be globally KMS. Assuming the existence of a stress tensor $T_{\mu \nu}$ and the finiteness of the relative entropy $S_{\textrm{rel}}(\varphi,\omega)$, $\ket{\varphi}$ obeys the following bound
  \begin{align}
  \Delta S_{\text{vN}}(\varphi,\omega)\leq  \frac{2\pi }{\kappa}\int_{\Sigma_{\mathcal{R}}}\sqrt{h}d^{d-1}x n^{\mu}\xi^{\nu}\bra{\varphi}T_{\mu \nu}\ket{\varphi}.
  \end{align}
\end{theorem}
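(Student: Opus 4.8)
The plan is to obtain the bound as a near-immediate corollary of Theorem~\ref{thm:main} combined with the positivity of relative entropy, once the one-sided modular Hamiltonian has been identified explicitly with the Killing flux through $\Sigma_{\mathcal{R}}$. The essential input is the exact equality \eqref{eqn:DSvN}, $\Delta S_{\text{vN}}(\varphi,\omega) = \langle\varphi|h_\omega^{\mathcal{A}}|\varphi\rangle - S_{\text{rel}}(\varphi\parallel\omega)$. Since $S_{\text{rel}}(\varphi\parallel\omega)\geq 0$, discarding the manifestly nonnegative relative-entropy term yields $\Delta S_{\text{vN}}(\varphi,\omega) \leq \langle\varphi|h_\omega^{\mathcal{A}}|\varphi\rangle$, so the entire problem reduces to evaluating the right-hand side for a globally KMS reference state.

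First I would verify that a globally KMS $\omega$ places us within the hypotheses of Theorem~\ref{thm:main}. Because $\omega$ is KMS with respect to the flow of the complete Killing field $\xi$, its modular flow is geometric and its modular Hamiltonian equals the total Killing flux $h_\omega = \frac{2\pi}{\kappa}\int_\Sigma\sqrt{h}\,d^{d-1}x\, n^\mu\xi^\nu T_{\mu\nu}$ over a complete Cauchy surface $\Sigma$. Splitting $\Sigma = \Sigma_{\mathcal{R}}\cup\Sigma_{\mathcal{R}'}$ and invoking the locality of the stress-tensor flux, as in \eqref{eq:hxiLR}, realizes the required decomposition $h_\omega = h_\omega^{\mathcal{A}'} + h_\omega^{\mathcal{A}}$ with $h_\omega^{\mathcal{A}}$ the one-sided integral \eqref{eq:hxiR} over $\Sigma_{\mathcal{R}}$, a Hermitian form affiliated with $\mathcal{A}$. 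I would then discharge the normalization hypothesis $\langle\omega|h_\omega^{\mathcal{A}}|\omega\rangle = 0$: the modular Hamiltonian annihilates its own state, $h_\omega|\omega\rangle = 0$, so the two one-sided expectations sum to zero, and the reflection symmetry of the KMS state about the bifurcation surface forces them to be equal, hence each vanishes.

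With these checks in place, I would apply Theorem~\ref{thm:main}, insert the explicit form \eqref{eq:hxiR} of $h_\omega^{\mathcal{A}}$, and take matrix elements in $|\varphi\rangle$ to obtain
\begin{align}
\Delta S_{\text{vN}}(\varphi,\omega)
&= \frac{2\pi}{\kappa}\int_{\Sigma_{\mathcal{R}}}\sqrt{h}\,d^{d-1}x\, n^\mu\xi^\nu\bra{\varphi}T_{\mu\nu}\ket{\varphi}
- S_{\text{rel}}(\varphi\parallel\omega) \nonumber \\
&\leq \frac{2\pi}{\kappa}\int_{\Sigma_{\mathcal{R}}}\sqrt{h}\,d^{d-1}x\, n^\mu\xi^\nu\bra{\varphi}T_{\mu\nu}\ket{\varphi},
\end{align}
which is the claimed bound.

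The hard part is not the final inequality but justifying the two structural inputs that let Theorem~\ref{thm:main} apply: (i) that the globally KMS modular Hamiltonian genuinely splits into one-sided Hermitian forms affiliated with $\mathcal{A}$ and $\mathcal{A}'$, which rests on the local character of $n^\mu\xi^\nu T_{\mu\nu}$ and the boost-like geometric action of the KMS modular flow near the bifurcation surface; and (ii) that the infinite-fluctuation subtleties of the one-sided integral are correctly handled by treating $h_\omega^{\mathcal{A}}$ as a densely-defined sesquilinear form rather than a genuine operator, as developed in Section~\ref{subsec:sesq}. I would also confirm that $|\varphi\rangle$ lies in the form domain of $h_\omega^{\mathcal{A}}$, so that $\langle\varphi|h_\omega^{\mathcal{A}}|\varphi\rangle$ is finite; this is precisely the condition under which the right-hand side of the bound is itself well-defined.
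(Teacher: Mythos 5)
Your overall strategy coincides with the paper's proof: decompose $h_\omega$ into the one-sided flux integrals of section \ref{subsec:sesq}, check the hypotheses of Theorem \ref{thm:main}, and drop the nonnegative relative entropy. However, your verification of the normalization hypothesis $\braket{\omega|h_\omega^{\mathcal{A}}|\omega}=0$ contains a genuine error. You argue that the two one-sided expectations sum to zero (correct, since $h_\omega\ket{\omega}=0$) and that reflection symmetry about the bifurcation surface forces them to be \emph{equal}, hence each zero. In fact the reflection forces them to be equal and \emph{opposite}: the Killing field $\xi$ is odd under the wedge reflection (it is past-directed in $\mathcal{R}'$), so the reflection maps the integrand $n^\mu\xi^\nu T_{\mu\nu}$ on $\Sigma_{\mathcal{R}}$ to \emph{minus} the corresponding integrand on $\Sigma_{\mathcal{R}'}$. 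Equivalently, the modular conjugation $J$, which preserves $\ket{\omega}$ and exchanges $\mathcal{A}$ with $\mathcal{A}'$, satisfies $Jh_\omega J=-h_\omega$ and hence $Jh_\omega^{\mathcal{A}}J=-h_\omega^{\mathcal{A}'}$, giving $\braket{\omega|h_\omega^{\mathcal{A}'}|\omega}=-\braket{\omega|h_\omega^{\mathcal{A}}|\omega}$. (Concretely, in two-dimensional Rindler, parity invariance of the vacuum gives $\int_{-\infty}^{0}dx\,x\braket{\omega|T_{tt}(x)|\omega}=-\int_{0}^{\infty}dx\,x\braket{\omega|T_{tt}(x)|\omega}$.) Combined with the sum rule, your two facts yield only $0=0$: they do not exclude $\braket{\omega|h_\omega^{\mathcal{A}}|\omega}$ taking a nonzero, or even divergent, value.

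The paper closes this step with a strictly local statement instead: $n^{\mu}\xi^{\nu}\braket{\omega|T_{\mu\nu}(x)|\omega}=0$ pointwise for all $x$ (the normalization of the stress tensor consistent with $\omega$ being the KMS state whose modular Hamiltonian is the Killing flux), from which each one-sided integral vanishes separately. Replacing your symmetry argument with this pointwise vanishing repairs the proof; the remainder of your proposal --- realizing the splitting via the locality of $n^\mu\xi^\nu T_{\mu\nu}$, treating $h_\omega^{\mathcal{A}}$ as a densely-defined Hermitian form affiliated with $\mathcal{A}$, confirming $\ket{\varphi}$ lies in its form domain, applying Theorem \ref{thm:main}, and invoking positivity of the relative entropy --- matches the paper's argument and is correct.
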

\begin{proof}
    The modular Hamiltonian is seen to decompose into Hermitian forms 
\begin{align}
    h_{\omega}^{\mathcal{A}} = \frac{2\pi }{\kappa} \int_{\Sigma_{\mathcal{R}}} \sqrt{h}d^{d-1}x ~n^{\mu}\xi^{\nu}T_{\mu \nu},\quad h_{\omega}^{\mathcal{A}'} = \frac{2\pi }{\kappa}\int_{\Sigma_{\mathcal{R}^{\prime}}} \sqrt{h}d^{d-1}x ~n^{\mu}\xi^{\nu}T_{\mu \nu},
\end{align}
where $\mathcal{R}'$ is the complement of $\mathcal{R}$ on $\Sigma$. From section \ref{subsec:sesq}, we know these forms are densely-defined and affiliated with $\mathcal{A}$ and $\mathcal{A}
'$ respectively. Moreover, $n^{\mu}\bra{\omega}T_{\mu \nu}(x)\ket{\omega}\xi^{\nu} = 0$ for all $x$ so $\bra{\omega}h_{\omega}^{\mathcal{A}}\ket{\omega}=\bra{\omega}h_{\omega}^{\mathcal{A}'}\ket{\omega}= 0$. We may then invoke theorem \ref{thm:main} to find
\beq \
\Delta S_{\text{vN}}(\varphi,\omega)=\langle\varphi|h_\omega^{\mathcal{A}}|\varphi\rangle
- S_{\text{rel}}(\varphi\parallel\omega) = \frac{2\pi }{\kappa} \int_{\Sigma_{\mathcal{R}}} \sqrt{h}d^{d-1}x ~n^{\mu}\xi^{\nu}\bra{\varphi}T_{\mu \nu}\ket{\varphi}
- S_{\text{rel}}(\varphi\parallel\omega).
\eeq
The relative entropy is strictly positive for distinct states and zero for identical states \cite{araki1976relative}, so
\beq \
\Delta S_{\text{vN}}(\varphi,\omega)\leq  \frac{2\pi }{\kappa} \int_{\Sigma_{\mathcal{R}}} \sqrt{h}d^{d-1}x ~n^{\mu}\xi^{\nu}\bra{\varphi}T_{\mu \nu}\ket{\varphi}
.
\eeq
\end{proof}

\section{Proof of the Quantum Null Energy Condition}
Here we fill in the details for demonstrating that the 
entropy difference $\Delta S_{\text{vN}} (\varphi,\omega)$
satisfies the QNEC inequality (\ref{eq:QNEC}).  
As shown by Casini, Teste, and Torroba 
\cite{2017JPhA...50J4001C}, the algebra
$\alg_{V(y)}$ associated with cuts of the Rindler horizon
forms a half-sided modular inclusion with respect to the $v=0$
horizon cut in the Minkowski vacuum $|\omega\rangle$.
This fact can be used to write the modular Hamiltonian of 
the state $|\omega\rangle$ for the algebra $\alg_{V(y)}$ 
as a local integral of the stress tensor, which 
naturally splits into Hermitian forms $h_\omega^{\alg_V}$
and $h_\omega^{\alg_V'}$, given explicitly by integrals
over the Rindler horizon,
\begin{align}
h_\omega^{\alg_V} 
&= 
2\pi \int_{\mathbb{R}^{d-2}} d^{d-2} y \int_{V(y)}^\infty
dv (v-V(y))T_{vv}(v,y) 
\label{eqn:hAV}
\\
h_\omega^{\alg_V'} 
&= 
2\pi \int_{\mathbb{R}^{d-2}} d^{d-2} y \int_{-\infty}^{V(y)}
dv (v-V(y))T_{vv}(v,y).
\end{align}

The formula for the entropy difference involves the expectation
value of $h_{\omega}^{\alg_V}$ in a state $|\varphi\rangle$.  
Since the QNEC involves second variational derivatives of
the entropy, we will need to take functional derivatives
of $\langle \varphi| h_\omega^{\alg_V}|\varphi\rangle$.  
These can be computed from the expression (\ref{eqn:hAV})
as follows
\begin{align} \label{eqn:del2hAV}
\frac{\delta^2}{\delta V(y)^2}\langle\varphi|h_\omega^{\alg_V}
|\varphi\rangle
&=
\frac{\delta}{\delta V(y)}\langle \varphi|
-2\pi \int_{V(y)}^\infty dv T_{vv}(v,y) |\varphi\rangle
=
2\pi\langle \varphi| T_{vv}(V(y), y)|\varphi\rangle.
\end{align}

We now invoke the version of the QNEC proved by Ceyhan and Faulkner
in terms of the relative entropy of the state $|\varphi\rangle$
with respect to the vacuum  \cite{2018arXiv181204683C}.  
Their results imply that the 
relative entropy $S_\text{rel}(\varphi\parallel\omega;\alg_{V})$
is a convex functional of the horizon cut $V(y)$,
i.e.\
that the second variational derivative is nonnegative,
\beq \label{eqn:del2Srel}
\frac{\delta^2}{\delta V(y)^2} 
S_{\text{rel}}(\varphi\parallel \omega;\alg_V) \geq 0.
\eeq
This inequality is rigorously established in 
\cite{2018arXiv181204683C}
under the assumptions that the average null energy and 
relative entropy for the the state $|\varphi\rangle$ are 
finite:
\beq \label{eqn:conditions}
\langle \varphi |
\int_{-\infty}^\infty dv T_{vv}(v,y)|\varphi\rangle < \infty;
\qquad
S_{\text{rel}}(\varphi\parallel\omega;\alg_{V}) <\infty.
\eeq
Combining these results, we can demonstrate that the 
regulated entropies satisfy the QNEC:

\begin{theorem}[Quantum Null Energy Condition]
\label{thm:QNEC}
Let $|\varphi\rangle$ be a cyclic and separating state with finite 
average null energy and relative entropy, as defined by 
the conditions (\ref{eqn:conditions}). Then the local
stress energy expectation value is bounded below
by the second variation of $\Delta S_{\text{vN}}(\varphi,\omega;
\alg_V)$, where $V(y)$ is a cut of a Rindler horizon,
\beq
2\pi \langle \varphi |T_{vv}(V(y),y)|\varphi\rangle
\geq
\frac{\delta^2}{\delta V(y)^2} \Delta S_{vN}(\varphi,\omega,
\alg_V).
\eeq
Additionally, assuming that the second variational derivative
of the error term $\frac{\delta^{2}}{\delta V(y)^2}\mathcal{E}$
in the regulated entropy 
defined in (\ref{eqn:Srhoep}) is uniformly bounded as $\epsilon\rightarrow 0$, 
the QNEC holds directly
for the regulated entropy $S_{\textrm{vN}}(\rho_{\tilde\varphi_\epsilon})$,
\beq
2\pi \langle \varphi |T_{vv}(V(y),y)|\varphi\rangle
\geq
\lim_{\epsilon\rightarrow 0}
\frac{\delta^2}{\delta V(y)^2} S_{\textrm{vN}}(\rho_{\tilde\varphi_\epsilon})
\eeq
without the need to subtract the vacuum entropy 
$S_{\textrm{vN}}(\rho_{\tilde\omega_\epsilon})$.
\end{theorem}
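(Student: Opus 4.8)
The plan is to reduce both bounds to two ingredients already assembled in the excerpt: the closed-form limiting expression (\ref{eqn:DSvN}) of Theorem \ref{thm:main}, specialized to the family of algebras $\alg_V$, namely $\Delta S_{\text{vN}}(\varphi,\omega;\alg_V)=\langle\varphi|h_\omega^{\alg_V}|\varphi\rangle-S_{\text{rel}}(\varphi\parallel\omega;\alg_V)$, together with the Ceyhan--Faulkner convexity (\ref{eqn:del2Srel}) of the relative entropy under deformations of the cut. Once these are in place, the entire argument is the act of taking two functional derivatives with respect to $V(y)$ and tracking a single sign, using the explicit second-derivative identity (\ref{eqn:del2hAV}) for the modular-energy term.

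For the first (vacuum-subtracted) inequality, I would differentiate (\ref{eqn:DSvN}) twice in $V(y)$. The modular-energy term gives $\frac{\delta^2}{\delta V(y)^2}\langle\varphi|h_\omega^{\alg_V}|\varphi\rangle=2\pi\langle\varphi|T_{vv}(V(y),y)|\varphi\rangle$ by (\ref{eqn:del2hAV}), while the relative-entropy term contributes $-\frac{\delta^2}{\delta V(y)^2}S_{\text{rel}}(\varphi\parallel\omega;\alg_V)$, which is nonpositive by (\ref{eqn:del2Srel}). Discarding this nonpositive piece yields $\frac{\delta^2}{\delta V(y)^2}\Delta S_{\text{vN}}(\varphi,\omega;\alg_V)\le 2\pi\langle\varphi|T_{vv}(V(y),y)|\varphi\rangle$, which is the asserted bound.

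For the second (unsubtracted) inequality, I would instead begin from the regulated expression (\ref{eqn:Srhoep}), $S(\rho_{\tilde\varphi_\epsilon})=-S_{\text{rel}}(\varphi\parallel\omega;\alg_V)+\langle\hat{X}-\log g(\hat{X})\rangle_{\tilde\varphi_\epsilon}+\epsilon\mathcal{E}$, and use the identity established inside the proof of Theorem \ref{thm:main} that, for the prescribed wavefunction, $\langle\hat{X}-\log g(\hat{X})\rangle_{\tilde\varphi_\epsilon}=\langle h_\omega^{\alg_V}\rangle_\varphi-\log(2\epsilon)+2$. The offset $-\log(2\epsilon)+2$ is $V$-independent—equivalently, the vacuum entropy (\ref{eqn:Svac}) is cut-independent because $\langle\omega|h_\omega^{\alg_V}|\omega\rangle=0$—so it is annihilated by $\delta^2/\delta V(y)^2$. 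Hence $\frac{\delta^2}{\delta V(y)^2}S(\rho_{\tilde\varphi_\epsilon})=2\pi\langle\varphi|T_{vv}(V(y),y)|\varphi\rangle-\frac{\delta^2}{\delta V(y)^2}S_{\text{rel}}(\varphi\parallel\omega;\alg_V)+\epsilon\,\frac{\delta^2}{\delta V(y)^2}\mathcal{E}$. Under the stated hypothesis that $\frac{\delta^2}{\delta V(y)^2}\mathcal{E}$ remains uniformly bounded, the last term vanishes as $\epsilon\to 0$, and the convexity (\ref{eqn:del2Srel}) once more removes a nonnegative contribution to give the bound without any vacuum subtraction.

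The conceptual content here is light, and the genuine obstacle is entirely in the second part: the clean limiting formula (\ref{eqn:DSvN}) is only known to hold pointwise in $V$, so differentiating it termwise amounts to interchanging $\lim_{\epsilon\to 0}$ with $\delta^2/\delta V(y)^2$, which is not automatic. This is precisely why the theorem carries the explicit assumption that $\delta^2\mathcal{E}/\delta V(y)^2$ be uniformly bounded as $\epsilon\to 0$; with that hypothesis the interchange is justified and the remaining steps are routine. Establishing such a bound directly—tracking the cut-dependence of the positive operator $R(\Delta,h)$ from Lemma \ref{lem:Rdelta} and of the $h\coth(\epsilon h)$ correction through the implicit $V$-dependence of $h=h_{\omega|\varphi}$—would be the substantive analytic task if one wished to discharge the assumption rather than impose it.
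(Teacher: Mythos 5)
Your proposal is correct and follows essentially the same route as the paper: the first bound comes from taking two variational derivatives of the limiting formula (\ref{eqn:DSvN}) and combining the identity (\ref{eqn:del2hAV}) with the Ceyhan--Faulkner convexity (\ref{eqn:del2Srel}), and the second bound comes from differentiating the regulated expression (\ref{eqn:Srhoep}), using the cut-independence of the vacuum/offset term $-\log(2\epsilon)+2$ and the uniform-boundedness hypothesis on $\frac{\delta^2}{\delta V(y)^2}\mathcal{E}$ to kill the error term as $\epsilon\to 0$. Your closing observation that the substance of the hypothesis is to justify interchanging $\lim_{\epsilon\to 0}$ with $\delta^2/\delta V(y)^2$ matches the paper's own discussion verbatim in spirit.
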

\begin{proof}
By Theorem \ref{thm:main}, the entanglement entropy difference
satisfies
\beq
\Delta S_{\text{vN}}(\varphi,\omega;\alg_V)
=
\langle \varphi|h_\omega^{\alg_V}|\varphi\rangle
-
S_{\text{rel}}(\varphi||\omega; \alg_V).
\eeq
Taking second variational derivatives of this equality and 
invoking (\ref{eqn:del2hAV}) and (\ref{eqn:del2Srel}), 
we find that 
\beq
\frac{\delta^2}{\delta V(y)^2} 
\Delta S_{\text{vN}}(\varphi,\omega; \alg_V)
\leq
\frac{\delta^2}{\delta V(y)^2}
\langle\varphi|h_{\omega}^{\alg_V}|\varphi\rangle
=
2\pi\langle\varphi |T_{vv}(V(y),y)|\varphi\rangle
\eeq
as desired.

It is important in this argument that the limit 
in the definition of the regulated entropy (\ref{eqn:delSvndefn})
is taken {\it before} computing the variational derivatives 
$\frac{\delta^2}{\delta V(y)^2}$.  However, in cases where 
the limit
commutes with the derivatives, switching their order allows the 
dependence on the vacuum entropy to be dropped from the QNEC
inequality.  This is because the regulated vacuum entropy
is independent of the cut, i.e.
\beq
S_{\textrm{vN}}(\rho_{\tilde\omega_\epsilon};\alg_V) = - \log (2\epsilon)+2 
\eeq
for the case of the $\sech$ regulator employed 
in Theorem (\ref{thm:main}) (see equation (\ref{eqn:Svac})).
Hence, the second variation of the regulated entropy
satisfies
\begin{align}
\frac{\delta^2}{\delta V(y)^2} S_{\textrm{vN}}(\rho_{\tilde\varphi_\epsilon};
\alg_V) 
&= 
\frac{\delta^2}{\delta V(y)^2}
\Big( S_{\textrm{vN}}(\rho_{\tilde\omega_\epsilon},\alg_V) - 
S_{\text{rel}}(\varphi\parallel\omega;\alg_V)
+\langle\varphi|h_\omega^{\alg_V}|\varphi\rangle
+ \epsilon \mathcal{E}\Big)
\\
&=
\frac{\delta^2}{\delta V(y)^2}\Big(  - 
S_{\text{rel}}(\varphi\parallel\omega;\alg_V)
+\langle\varphi|h_\omega^{\alg_V}|\varphi\rangle
\Big)
+\epsilon \frac{\delta^2}{\delta V(y)^2}\mathcal{E}\Big).
\end{align}
From the assumption of uniform boundedness of $\frac{\delta^2}
{\delta V(y)^2} \mathcal{E}$ as $\epsilon\rightarrow 0$, the final term vanishes in the limit as 
$\epsilon\rightarrow0$.  In this case, we find that 
\beq
\lim_{\epsilon\rightarrow 0} \frac{\delta^2}{\delta V(y)^2}
S_{\textrm{vN}}(\rho_{\tilde\varphi_\epsilon},\alg_V) = 
\frac{\delta^2}{\delta V(y)^2}\Delta S_{\text{vN}}(\varphi,
\omega;\alg_V),
\eeq
and hence the former quantity can be used as the lower
bound in the QNEC.
\end{proof}

As a final comment, we note that while the assumption that $\frac{\delta^2}{\delta V(y)^2}\mathcal{E}$ is finite 
does not appear to directly follow from the arguments
of section  \ref{app:bound}, it seems likely that this quantity is 
finite in a wide class of states.  An interesting direction
for future work would be to determine what additional
assumptions are needed to ensure that this error term
has finite variational derivatives.  Note also that Ceyhan and Faulkner's 
proof of convexity of the relative entropy does not assume that 
$|\varphi\rangle$ is cyclic-separating, 
so this assumption in theorem \ref{thm:QNEC} could be lifted if
theorem \ref{thm:main} can be generalized to this case.

\end{document}